\documentclass[journal]{IEEEtran}

\usepackage{amsmath}
\usepackage{amsthm}
\usepackage{amssymb}
\usepackage{verbatim}
\usepackage{graphicx}
\usepackage{ifthen}
\usepackage{color}
%\usepackage{tikz}
%\usepackage{pgfplots}
%\usetikzlibrary {positioning}
%\usetikzlibrary{shapes,snakes}
%\usetikzlibrary{arrows,calc}
%\usepackage{relsize}
%\usetikzlibrary{patterns}
%\usepackage{algorithm}
%\usepackage{algpseudocode}

\theoremstyle{plain}
\newtheorem{lemma}{Lemma}
\newtheorem{rem}{Remark}
\newtheorem{theorem}{Theorem}
\newtheorem{assumption}{Assumption}

\newtheorem{proposition}[theorem]{Proposition}
\newtheorem{definition}{Definition}
\newtheorem{corollary}{Corollary}
\theoremstyle{definition}

%\floatname{algorithm}{Algorithm}
%\newtheorem{remark}{Remark}

\newcommand{\argmin}{\operatornamewithlimits{argmin}}
\allowdisplaybreaks[4]

\usepackage{amsfonts}
\usepackage{times}
%\usepackage[pdftex]{graphicx}
%\DeclareGraphicsExtensions{.jpg}
%\usepackage[dvips]{graphicx}
%\DeclareGraphicsExtensions{.eps}
\usepackage{latexsym}
\usepackage{amssymb}
\usepackage{amsmath}
\usepackage{cite}
\usepackage{verbatim}
\newtheorem{prop}{Proposition}

\newcommand{\R}{\mathbb{R}}
\newcommand{\Z}{\mathbb{Z}}
\newcommand{\ra}{\rightarrow} 
\newcommand{\ua}{\uparrow}

\newcommand{\Exp}[1]{\mathbb{E}\left[#1\right]} %Expectation

%\DeclareMathOperator*{\argmax}{arg\,max}
%\DeclareMathOperator*{\argmin}{arg\,min}

% blackboard lowercase

\def\bb0{{\mathbb{0}}}

% Bold lowercase

\def\bb{{\mathbf{b}}}

\def\b0{{\mathbf{0}}}
\def\opt{\mathsf{OPT}}

% Bold capital letters

\def\b1{{\mathbf{1}}}

% Blackboard capital letters

% Caligraphic capital letters

\def\cS{\mathcal{S}}

% Sans serif capital letters

% sans serif lowercase

\def\sfc{{\mathsf{c}}}

\def\sfr{{\mathsf{r}}}

\def\sf0{{\mathsf{0}}}

% Added by Takao
\def\nn{\nonumber}

\newboolean{showcomments}
\setboolean{showcomments}{true}
\newcommand{\jk}[1]{  \ifthenelse{\boolean{showcomments}}
{ \textcolor{red}{(JK says:  #1)}} {}  }
\newcommand{\todo}[1]{\ifthenelse{\boolean{showcomments}}
{ \textcolor{red}{(to do:  #1)}}{}}

\newcommand{\revision}[1]{#1}
\newcommand{\ignore}[1]{}

\begin{document}

% Set the width to be used for figures
\newlength{\figurewidth}\setlength{\figurewidth}{0.6\columnwidth}

%----------------------------------------------------------------------
% Title Information, Abstract and Keywords
%----------------------------------------------------------------------

% "Cheat" by increasing the text area vertically by one line
% (this has to be before \maketitle to have an effect on the first page)
%\addtolength{\topmargin}{-0.5\baselineskip}
%\addtolength{\textheight}{\baselineskip}

\title{\fontsize{23}{23}\selectfont Multiple Server SRPT with Speed Scaling is Competitive
}

\newcounter{one}
\setcounter{one}{1}
\newcounter{two}
\setcounter{two}{2}

\author{Rahul Vaze and~Jayakrishnan Nair% <-this % stops a space
\thanks{Rahul Vaze is with the School of Technology and Computer Science, Tata Institute of Fundamental Research, India. Jayakrishnan Nair is with the Department of Electrical Engineering, IIT Bombay, India.}}% <-this % stops a space

\maketitle

\begin{abstract}
  Can the popular shortest remaining processing time (SRPT) algorithm
  achieve a constant competitive ratio on multiple servers when server
  speeds are adjustable (speed scaling) with respect to the flow time
  plus energy consumption metric? This question has remained open for
  a while, where a negative result in the absence of speed scaling is
  well known.  The main result of this paper is to show that
  multi-server SRPT \revision{with speed scaling} can be constant competitive, with a competitive
  ratio that only depends on the power-usage function of the servers,
  but not on the number of jobs/servers or the job sizes (unlike when
  speed scaling is not allowed).  When all job sizes are unity, we
  show that round-robin routing is optimal and can achieve the same
  competitive ratio as the best known algorithm for the single server
  problem. Finally, we show that a class of greedy dispatch policies,
  including policies that route to the least loaded or the shortest
  queue, do not admit a constant competitive ratio. When job arrivals
  are stochastic, with Poisson arrivals and i.i.d. job sizes, we show
  that random routing and a simple gated-static speed scaling
  algorithm achieves a constant competitive ratio.
\end{abstract}

\section{Introduction}

How to route and schedule jobs are two of the fundamental problems in
multi-processor/multi-server settings, e.g. microprocessors with
multiple cores. Microprocessors also have the flexibility of variable
speed of operation, called {\it speed scaling}, where to operate at
speed $s$, the power utilization is $P(s);$ typically,
$P(s) =s^\alpha$, with $2\le \alpha \le 3$.  Speed scaling is also
available in modern queuing systems where servers can operate at
variable service rates with an appropriate cost function $P(.)$.

Increasing the speed of the server reduces the response times
(completion minus arrival time) but incurs a larger energy cost. Thus,
there is a natural tradeoff between the the {\it flow time} (defined
as the sum of the response times across all jobs) and the total energy
cost, and a natural objective is to minimize a linear combination of
the flow time and total energy, called \emph{flow time plus energy}.

In this paper, we consider the {\it online} problem of routing,
scheduling, and speed scaling in a multi-server setting to minimize
the flow time plus energy, where jobs arrive (are released) over time
and decisions have to be made causally. On the arrival of a new job, a
centralized controller needs to make a causal decision about which
jobs to process on which server and at what speed, where preemption
and migration is allowed. By migration, we mean that a job can be
preempted on one server and restarted on another server later.  The
model, however, does not allow job splitting, i.e., a job can only be
processed on a single server at any time.

%With multiple servers, routing; assigning jobs to servers, is one of the fundamental problems. In addition, if the speed of the servers is also controllable, called speed scaling, the problem of
%routing, scheduling (which job to run at each server) and  speed scaling need to be solved jointly. 

For this problem, both the stochastic and worst case analysis is of
interest. In the stochastic model, the input (job sizes and
arrival instants) is assumed to follow a distribution, and performance
guarantees in expectation are derived.  In the worst case analysis,
the input can be generated by an adversary, and the performance metric
is the competitive ratio, that is defined as the maximum of the ratio
of the cost of the online algorithm and the optimal offline algorithm
(that knows the entire input sequence ahead of time).

%The goal in this case is to derive an online algorithm that minimizes
%the competitive ratio. In this paper, we consider both the stochastic
%and worst case model, where as expected, the analysis under the worst
%case model is much harder.

\subsection{Prior Work}
\subsubsection{Single Server}
For a single server, it is known that Shortest Remaining
Processing Time (SRPT) is an optimal scheduling policy, and the only
decision with speed scaling is the optimal dynamic speed choice. There
is a large body of work on speed scaling in the single server setting
\cite{yao1995scheduling, baruah1991line, irani2007algorithms,
  bansal2007speed, albers2007energy, wierman2009power,
  lam2008speed,Andrew2010} both in the stochastic as well as worst
case settings, where mostly $P(s) =s^\alpha$ is used, under various
assumptions, e.g. bounded speed $s\in [0, S]$
\cite{bansal2008scheduling}, with and without deadlines
\cite{chan2009optimizing, han2010deadline,cote2010energy}, etc.

In the stochastic model, \cite{wierman2009power} showed that a simple
fixed speed policy (called gated static) that depends only on the
load/utilization and is independent of the current number of
unfinished jobs/sizes has a constant multiplicative gap from the
`unknown' optimal policy. Further work in this direction can be found
in \cite{gebrehiwot2014optimal, gebrehiwot2016energy}, where
\cite{gebrehiwot2016energy} derived the mean response time under the
SRPT algorithm.  For the worst case, there are many results
\cite{yao1995scheduling, baruah1991line, irani2007algorithms,
  bansal2007speed, albers2007energy, lam2008speed,
  bansal2008scheduling, chan2009optimizing,
  han2010deadline,cote2010energy,bansal2009speedconf,Andrew2010}. A
key result in this space was proved in \cite{bansal2009speedconf},
where an SRPT-based speed scaling algorithm is proved to be
$(3+\epsilon)$-competitive algorithm for an arbitrary power function
$P(\cdot)$. In \cite{Andrew2010}, using essentially the same ideas as
in \cite{bansal2009speedconf}, but with a more careful analysis, a
slightly modified SRPT-based speed-scaling algorithm is shown to be
$(2+\epsilon)$-competitive algorithm, also for an arbitrary power
function.

In the worst case setting, when considering speed scaling, two classes
of problems are studied:~(i) unweighted and~(ii)~weighted, where
in~(i) the delay that each job experiences is given equal weight in
the flow time computation, while in~(ii) it is scaled by a weight that
can be arbitrary. The weighted setting is fundamentally harder that
the unweighted one, and it is known that constant-competitive online
algorithms are not possible \cite{bansal2009weighted} in the weighted setting, even for a
single server, while constant competitive algorithms are known for the
unweighted case, even for arbitrary energy functions, e.g., the $(2+
\epsilon)$-competitive algorithm proposed in \cite{Andrew2010}. To
circumvent the negative result for the weighted case, typically, the
online algorithm is allowed a speed augmentation of $1+\delta$
compared to the optimal offline algorithm, in which case algorithms
with $O(1)$ competitive ratios are possible, where $O(1)$ depends
on~$\delta$.

\subsubsection{Multiple Servers}
With multiple servers without speed scaling (when the server speeds
are fixed), to minimize just the flow time, a well known {\it
  negative} result from \cite{leonardi2007approximating} states that
the SRPT algorithm (which always processes the $m$ smallest jobs with
$m$ servers) that requires both preemption and job migration has a
competitive ratio that grows as the logarithm of the ratio of the
largest and the smallest job size and the logarithm of the ratio of
the number of jobs and the number of servers.  Moreover,
\cite{leonardi2007approximating} also showed that no online algorithm
can do better than SRPT when server speeds are fixed.

With multiple servers, one critical aspect is whether job migration is
allowed or not. With job migration, a preempted job can be processed
by any of the servers and not necessarily by the server where it was
partially processed first.  Remarkably in
\cite{awerbuch2002minimizing}, a non-migratory algorithm that only
requires preemption is proposed that achieves the same competitive
ratio as SRPT.
%even without job
%migration, when server speeds are fixed, the competitive ratio of SRPT
%with job migration is achievable with a different algorithm that only
%requires preemption.
%%Actual
A more positive result for SRPT is that if it is allowed a speed
augmentation of $(2-1/m)$ (respectively, $(1+\delta)$) over the
offline optimal algorithm, then it has a constant competitive ratio of
$1$ (respectively, a constant constant depending on $\delta$); see
\cite{phillips1997optimal,fox2011online}.

The problem to minimize flow time and energy with speed scaling in the presence of multiple servers
 has been studied in
\cite{lam2008competitive, greiner2009bell, gupta2010scalably,
  lam2012improved, gupta2012scheduling, Devanur18} under the worst case scenario. The homogenous server case
was studied in \cite{lam2012improved, greiner2009bell,Devanur18}, i.e., $P(.)$
is identical for all servers, while the heterogenous case was
addressed in \cite{gupta2010scalably, gupta2012scheduling}, where
$P(.)$ is allowed to be different for different servers.

For the unweighted flow time and energy problem under the homogenous
server case, a variant of the round robin algorithm without migration
has been shown to have a competitive ratio of $O(1)$
\cite{lam2008competitive} with $(1+\delta)$ augmentation with bounded
server speeds. This result was extended in \cite{greiner2009bell} for
the weighted flow time plus energy using a randomized server selection
algorithm that also does not use migration.

For the heterogenous server setting with $(1+\delta)$ augmentation,
\cite{gupta2010scalably, gupta2012scheduling}, derived algorithms that
assigns job to server that cause least increase in the projected
future weighted flow and a variant of processor sharing, respectively,
that are $O(1)$ competitive in unweighted and weighted flow time plus
energy. Moreover, if for server $k$, $P_k(s) = s^{\alpha_k}$, then the
algorithm in \cite{gupta2010scalably,Devanur18} has a competitive
ratio dependent on $\alpha$ without any need for speed augmentation.

In the stochastic setting, for multiple servers, the flow time plus
energy problem with multiple servers is studied under a fluid model
\cite{asghari2014energy, mukherjee2017optimal} or modelled as a Markov
decision process \cite{gebrehiwot2017near}, and near optimal policies
are derived.

% } both homogenous and heterogenous, with augmentation similar online
%   algorithm is O(1/)-competitive with (1 +)-speedup for unweighted
%   flow plus energy. We also remark that when the power functions
%   Pi(s) are restricted to be of the form s?i , our algorithms give a
%   O(?2)-competitive algorithm (with no resource augmentation needed)
%   for the problem of minimizing weighted flow plus energy, and an
%   O(?)-competitive algorithm for minimizing the unweighted flow plus
%   energy, where ? = maxi ?i.
%
%
%To circumvent the negative result for the weighted case, typically, 
%the online algorithm is allowed a augmentation of $1+\epsilon$ compared to the optimal offline algorithm, in which case algorithms with competitive ratios of $2/\epsilon$ for a single server and $1/\epsilon^5$ for multiple heterogenous (power cost function can be different) servers \cite{gupta2012scheduling} are known. 
%Several other variants of this problem where jobs have deadlines have also been considered in literature \cite{}.

Our focus in this work is on the unweighted flow time plus energy
under the homogenous server setting, where in the context of the prior
work we want to answer the following {\it open} questions: 
(i) \revision{under the
worst case setting, is it possible to achieve a constant competitive
ratio with algorithms simpler than the ones proposed \cite{gupta2010scalably, gupta2012scheduling}, that require  resource augmentation, or \cite{Devanur18} that is computationally expensive.}  In particular, can SRPT do so, since it is a
widely used and simple to implement algorithm?  This question is also
directly related to the limitation of SRPT without speed scaling as
shown in \cite{leonardi2007approximating}, and whether SRPT with and
without speed scaling are fundamentally different. (ii) For the
stochastic setting, can simple algorithms achieve near optimal
performance without the need of fluid limit approximations?

\subsection{Our Contributions}

Let the number of (homogenous) servers be $m.$

\begin{itemize}

\item The SRPT algorithm, with speed chosen as
  $P^{-1}\left(\frac{n}{m}\right)$ if $n \ge m$ and $P^{-1}(1)$ if
  $n < m$, where $n$ is the number of unfinished jobs, is shown to be
  $c$-competitive, where
  $$c = P(2-1/m) \left(2 + \frac{2}{P^{-1}(1)}
    \max(1,P(\bar{s}))\right),$$ \revision{where $\bar{s}$ is a
    constant associated with the function $P(\cdot).$} This means the
  above algorithm is constant-competitive, with a competitive ratio
  that is independent of the number of servers as well as the workload
  sequence.\footnote{While $m$ does appear in the expression for the
    competitive ratio, note that $2-1/m$ is trivially upper bounded by
    2.} This result is proved under mild regularity assumptions on the
  power function~$P(\cdot),$ which can be further relaxed using
  standard arguments \cite{bansal2009speedconf}. For the special case
  $P(s) = s^{\alpha},$ where $\alpha \in (1,2)$ , we derive another
  bound of $3 + \frac{2}{2-\alpha}$ on the competitive ratio; this
  bound is tighter than the previous one for $\alpha$ smaller than $2.$
  %Even though the competitive ratio appears large, for the range of
  %practical interest $\alpha$ is $1 < \alpha\le 3$, it is reasonable.
  %The results are valid for $\alpha \le 4$, that covers most of the
  %practical power functions and more importantly the cube-root-rule
  %for power utilization in microprocessors
  %\cite{bansal2009speedconf}, where $\alpha=3$.
  Similar to the algorithm proposed in \cite{gupta2010scalably, Devanur18}, the
  competitive ratio of the our SRPT-based policy also depends on
  $\alpha$. However, the algorithm proposed here is much simpler, and
  comes with a lower implementation complexity.
\item An important conclusion to draw from this result is that SRPT
  with speed scaling is fundamentally different as compared to the
  case when speed scaling is not allowed; in the latter setting, the
  competitive ratio depends on the number of jobs and their sizes
  \cite{leonardi2007approximating}. Thus, allowing for speed scaling,
  the ever popular SRPT is shown to be robust in the multiple server
  setting.
\item With speed scaling, we also derive some lower bounds for the
  immediate dispatch case when the job has to be assigned to a server
  instantaneously on its arrival and cannot be migrated across
  servers, though preemption within a server is allowed. Under this
  setting, we show that greedy routing policies, that assign a new job
  to the currently least loaded server or to the historically least
  loaded server have a competitive ratio of at least
  $\Omega(m^{1-1/\alpha})$ \revision{no matter how the speed is chosen}. Moreover, even when immediate dispatch is
  not necessary (i.e., jobs can wait in a common queue), but job
  migration across servers is not allowed, we show that the
  competitive ratio of SRPT is at least $\Omega(m^{1-1/\alpha})$.

\item For the special case where all jobs have unit size, we show that
  round robin (RR) routing is optimal, and the best known competitive
  ratio results on speed scaling to minimize the flow-time plus energy
  in the single server setting apply in the multiple server setting as
  well.

\item We also consider the stochastic setting, where jobs arrive
  according to a Poisson process with i.i.d. sizes. This case turns
  out to be significantly easier than the worst case; we show that
  with $P(s) = s^{\alpha}$ ($\alpha > 1$), random routing and a simple
  gated-static speed scaling algorithm achieves a constant competitive
  ratio, e.g., $2$ for $P(s) = s^2.$

\end{itemize}

\section{System Model}

Let the input consist of $J$ jobs, where job $j$ arrives (is released)
at time $a_j$ and has work/size $w_j$.  There are $m$ homogenous
servers, each with the same power function $P(s),$ where $P(s)$
denotes the power consumed while running at speed $s$. 
% Typically, $2\le \alpha\le 3$.
Any job can be processed by any of the $m$ servers.

The speed $s$ is the rate at which work is executed by any of the
server, and $w$ amount of work is completed in time $w/s$ by any
server if run at speed $s$ throughout time $w/s$.  A job $j$ is
defined to be complete at time $c_j$ if $w_j$ amount of work has been
completed for it, possibly by different servers.  We assume that
preemption is allowed, i.e., a job can be suspended and later
restarted from the point at which it was suspended. Moreover, we also
assume that job migration is allowed, i.e., if a job is preempted it
can be processed later at a different server than the one from which
it was preempted. Thus, a job can be processed by different servers at
different intervals, but at any given time it can be processed by only
server, i.e., no job splitting is allowed. The flow time $f_j$ for job
$j$ is $f_j = c_j-a_j$ (completion time minus the arrival time) and
the overall flow time is $F= \sum_{j} f_j$. From here on we refer to
$F$ as just the flow time.  Note that $F = \int n(t) dt$, where $n(t)$
is the number of unfinished jobs at time $t$. Thus, flow time can also
be interpreted as the \emph{cumulative holding cost}, where
instantaneous holding cost at time $t$ equals $n(t).$

Let server $k$ run at speed $s_k(t)$ at time $t$.  The energy cost is
defined as $\sum_{k=1}^mP(s_k(t))$ summed over the flow time.
Choosing larger speeds reduces the flow time, however, increases the
energy cost, and the natural objective function that has been
considered extensively in the literature is the sum of flow time and
energy cost \revision{(see, for example,
  \cite{bansal2007speed,Andrew2010,greiner2009bell,lam2012improved})},
which we define as\footnote{It is also natural to take the objective to be a linear 
    combination  of flow time and energy, i.e., $\int n(t) dt + \beta 
    \int  \sum_{k=1}^m P(s_k(t))dt,$ where $\beta > 0$ weighs the energy 
    cost relative to the delay cost. However, note that since the 
    factor~$\beta$ may be absorbed into the power function, we will work with 
    the objective~\eqref{eq:cost} without loss of generality.}
\begin{equation}\label{eq:cost}
  C =  \int n(t) dt + \int  \sum_{k=1}^m P(s_k(t))dt.%
\end{equation}

Any online algorithm only has causal information, i.e., it becomes
aware of job $j$ only at time $a_j$. Any online algorithm with
multiple servers has to make two causal decisions: routing; that
specifies the assignment of jobs to servers, and scheduling; that
specifies a job to be processed by each server and at what speed at
each time. Let the cost \eqref{eq:cost} of an online algorithm $A$ be
$C_A$.  Moreover, let the cost of \eqref{eq:cost} for an offline
optimal algorithm that knows the job arrival sequence $\sigma$ (both
$a_j$ and $w_j$) in advance be $C_{\opt}$. Then the competitive ratio
of the online algorithm $A$ for $\sigma$ is defined as
\begin{equation}\label{eq:compratio}
  \revision{\sfc_A(\sigma) = \max_{\sigma} \frac{C_A(\sigma)}{C_{\opt}(\sigma)},}
\end{equation}
and the objective function considered in this paper is to find an
online algorithm that minimizes the worst case competitive ratio
\begin{equation}\label{eq:obj}
\sfc^\star = \min_A   \sfc_A(\sigma).
\end{equation}

We will also consider stochastic input $\sigma$ where both $a_j$ and
$w_j$ are chosen stochastically, in which case our definition for
competitive ratio for $A$ will be
\begin{equation}\label{eq:compratiostoc}
\sfc_A = \frac{\Exp{C_A}}{\Exp{C_{\opt}}},
\end{equation} 
where the expectation is with respect to the stochastic input; see
Section~\ref{sec:stoc} for the details. Correspondingly, the goal is
to come up with an online algorithm that minimizes $\sfc_A.$

In Sections \ref{sec:CR_upper} to \ref{sec:CR_lower}, we study the
worst-case setting, and present the results for the stochastic setting
in Section~\ref{sec:stoc}.

\section{Worst Case Competitive Ratio: Upper Bounds}
\label{sec:CR_upper}

In this section, we present our results on constant competitive
policies for scheduling and speed scaling in a multi-server
environment. We propose an online policy that performs SRPT scheduling,
where the instantaneous speed of each server is a function of the
number of outstanding jobs in the system. We prove that this policy is
constant competitive for a broad class of power
functions. Specifically, \revision{our upper bounds on the competitive
  ratio} depend only on the power function, but not on the number of
jobs, their sizes, or the number of servers.

\subsection{SRPT Algorithm}
In this section, we consider the SRPT algorithm for \revision{scheduling and routing}, and
analyze its competitive ratio when the server speeds are chosen as
follows. Let $n(t)$ and $n_o(t)$ denote the number of unfinished jobs
with the SRPT algorithm and $\opt$ (the offline optimal algorithm)
respectively, at time $t.$ Moreover, let $A(t)$ and $O(t)$ be the set
of active jobs with the SRPT algorithm and $\opt,$
respectively. Recall that the SRPT algorithm maintains a single queue
and serves the $\min\{m, n(t)\}$ shortest jobs at any time
$t.$
%\footnote{Note that SRPT scheduling involves both \emph{pre-emption}
%  (i.e., a job in service gets interrupted by a new arrival having a
%  size smaller than its own residual size) and \emph{migration}
%  (i.e., a job once interrupted on one server may resume service on
%  another server).}

The speed for job $k \in A(t)$ with the SRPT algorithm is chosen as
\begin{equation}\label{eq:speeddef}
s_k(t) =
\begin{cases}
 &   P^{-1}\left(\frac{n(t)}{m}\right)\  \text{if}\  n(t) \ge m, \\
&  P^{-1}(1), \ \ \ \ \ \ \ \  \ \text{otherwise}.
\end{cases}
\end{equation}
\revision{Idle servers, i.e., servers not processing any job, are run
  at speed~0.} The above speed scaling rule can be interpreted as
follows. Under \eqref{eq:speeddef},
$\sum_{k \in A(t)} P(s_k(t)) = n(t),$ i.e., \emph{the instantaneous
  power consumption is matched to the instantaneous job holding cost.}
\revision{This type of \emph{energy-proportional} speed scaling has be
  shown to be near-optimal in the single server setting
  \cite{Andrew2010}, where the basic motivation for this choice of
  speed is \emph{local optimality}---it is the optimal speed choice at
  time $t$ assuming no future job arrivals after time $t$.}

Our main result (Theorem~\ref{thm:srptimproved}) is proved under the
following assumptions on the power function.
\begin{assumption}
  \label{ass:Pconvex}
  $P:\R_+ \ra \R_+$ with $P(0) = 0$ is differentiable, strictly increasing, and
  strictly convex function, which implies $\lim_{s \ra \infty}P(s) =
  \infty,$ and $\bar{s}:=\inf\{s > 0 \ |\ P(s) > s \} < \infty$.
\end{assumption}
\revision{
\begin{assumption}
  \label{ass:new}
  For $x,y > 0,$ $P(xy) \leq P(x) P(y).$
\end{assumption}
Assumptions~1 and~2 are satisfied by power functions of the form
$P(s) = c s^{\alpha},$ where $\alpha > 1$ and
$c \geq 1.$}\footnote{\revision{Our main result (see
  Theorem~\ref{thm:srptimproved}) can also be proved under the
  following assumption, that is weaker than Assumption~\ref{ass:new}:
  There exists $\gamma \geq 1$ such that $P(xy) \leq \gamma P(x) P(y)$
  for all $x,y > 0.$ In this case, the upper bound on the competitive
  ratio would depend on $\gamma.$}}
%
%%\begin{rem}It is possible to relax Assumption~\ref{ass:Pconvex} to
%  allow for almost arbitrary power functions (including non-convex
%  functions and those associated a finite maximum speed) by adapting
%  the arguments in \cite{bansal2009speedconf}; see, for example,
%  \cite{gupta2010scalably}. Since these arguments are well understood,
%  we do not repeat them here. The main takeaway in the context of the
%  present paper is that Assumption~\ref{ass:Pconvex} is not
%  restrictive, and that a $c$-competitive algorithm under
%  Assumption~\ref{ass:Pconvex} can be extended to obtain a
%  $(c+\epsilon)$-competitive algorithm under an arbitrary power
%  function for $\epsilon > 0.$ \jk{To discuss.}
%\end{rem}
We are now ready to state our main result, which shows that our SRPT
algorithm is constant competitive.
\begin{theorem}\label{thm:srptimproved} 
  Under \revision{Assumptions~\ref{ass:Pconvex} and~\ref{ass:new}},
  the SRPT algorithm with speed scaling \eqref{eq:speeddef} is
  $c$-competitive, where
  $$c = P(2-1/m) \left(2 + \frac{2}{P^{-1}(1)} \max(1,P(\bar{s}))\right).$$
\end{theorem}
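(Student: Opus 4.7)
The plan hinges on a key identity: under the speed rule~\eqref{eq:speeddef}, the total instantaneous power exactly equals $n(t).$ Indeed, when $n(t) \ge m$, each of the $m$ active servers consumes $P(P^{-1}(n(t)/m)) = n(t)/m$, and when $n(t) < m$, each of the $n(t)$ active servers consumes $P(P^{-1}(1)) = 1$. Hence $C_{SRPT} = 2\int n(t)\,dt = 2 F_{SRPT}$, and the problem reduces to bounding $F_{SRPT}$ by $\tfrac{c}{2}C_{OPT}$.

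The first step is to introduce an auxiliary SRPT-based algorithm $\tilde A$ whose cost can be cheaply bounded in terms of $C_{OPT}$. Let $S_o(t) = \sum_k s^o_k(t)$ be the aggregate processing rate used by $\opt$ at time $t$, and let $\tilde A$ use SRPT scheduling on $m$ servers, each running at speed
$$\tilde s(t) = (2-1/m)\,\max\!\bigl(S_o(t)/m,\,P^{-1}(1)\bigr).$$
A time-rescaled version of the Phillips--Stein--Wein $(2-1/m)$-speed-augmentation theorem \cite{phillips1997optimal} then gives $n_{\tilde A}(t) \le n_o(t)$ at every $t$, hence $F_{\tilde A} \le F_{OPT}.$ By Assumption~\ref{ass:new}, $m P(\tilde s(t)) \le P(2-1/m)\,m P\!\bigl(\max(S_o(t)/m,P^{-1}(1))\bigr),$ and convexity of $P$ together with the baseline $P^{-1}(1)$ lets one bound this in turn by $P(2-1/m)\bigl(\sum_k P(s_k^o(t)) + \text{baseline}\bigr).$ After charging the baseline term against $C_{OPT},$ this yields $C_{\tilde A} \le P(2-1/m)\,C_{OPT}.$

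The main technical step is to compare our SRPT (with its self-adaptive speed rule) to $\tilde A$ by a potential function of the form
$$\Phi(t) = \beta \int_0^\infty \frac{[\,n_q(t) - \tilde n_q(t)\,]_+}{\max(\tilde s(t),P^{-1}(1))}\,dq,$$
where $n_q(t)$ and $\tilde n_q(t)$ count jobs with remaining work at least $q$ in our SRPT and in $\tilde A$ respectively, and $\beta$ is tuned to absorb the $\max(1,P(\bar s))/P^{-1}(1)$ factor. One verifies (i) $\Phi(0)=\Phi(\infty)=0$; (ii) arrivals do not jump $\Phi,$ since both algorithms receive the same job; (iii) completions in either algorithm only decrease $\Phi;$ and (iv) between events the running derivative obeys
$$n(t) + \frac{d\Phi}{dt} \;\le\; \Bigl(1 + \tfrac{\max(1,P(\bar s))}{P^{-1}(1)}\Bigr)\bigl(\tilde n(t) + m P(\tilde s(t))\bigr).$$
Integrating and combining with $C_{\tilde A} \le P(2-1/m)\,C_{OPT}$ delivers $F_{SRPT} \le \tfrac{c}{2}C_{OPT}$ with the advertised~$c.$

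The principal obstacle is verifying the running condition (iv), which must simultaneously accommodate the two regimes of the speed rule. In the high-load regime ($n\ge m$) the per-server speed $P^{-1}(n/m)$ varies smoothly and the argument mirrors the single-server analyses of \cite{bansal2009speedconf,Andrew2010}; in the low-load regime ($n<m$) the flat floor speed $P^{-1}(1)$ generates the factor $\max(1,P(\bar s))/P^{-1}(1),$ with the threshold $\bar s$ from Assumption~\ref{ass:Pconvex} separating the sub-linear regime ($P(s)\le s$, where floor-speed idle cost is dominated by holding cost) from the super-linear regime (where the explicit $P(\bar s)$ charge must be invoked). Throughout, Assumption~\ref{ass:new} is used at every point where the $(2-1/m)$ augmentation factor has to be extracted from $P(\cdot)$, i.e., wherever we replace $P((2-1/m)\,x)$ by $P(2-1/m)\,P(x)$.
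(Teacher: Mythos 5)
Your high-level architecture matches the paper's: reduce $C_{SRPT}$ to $2\int n(t)\,dt$ via the energy-proportional speed rule, introduce an intermediate SRPT-following offline algorithm whose cost is at most $P(2-1/m)\,C_{\opt}$ (the paper's $OS$ and Lemma~\ref{lem:OS}, using \cite{phillips1997optimal} plus Assumption~\ref{ass:new}), and then compare the online SRPT to that intermediate algorithm by a potential function. The arithmetic of your constants is also consistent with the stated $c$.

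However, the potential function you propose cannot deliver the running condition (iv), and this is the entire technical content of the theorem. Your $\Phi$ is \emph{linear} in the excess job count $[n_q(t)-\tilde n_q(t)]_+$, divided by a speed. When the adversary/intermediate algorithm is essentially idle and the online algorithm holds $n\gg m$ jobs, condition (iv) requires $d\Phi/dt \le -n + O(1)$. But the fastest your $\Phi$ can decrease is $\beta\, m\, P^{-1}(n/m)/P^{-1}(1)$, which for $P(s)=s^\alpha$ is $\beta\, m^{1-1/\alpha} n^{1/\alpha} = o(n)$: the decrease is sublinear in $n$ for any fixed $\beta$, so the drift inequality fails for large $n$. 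This is exactly why the paper (following \cite{bansal2009speedconf}) builds $\Phi_1$ from the nonlinear increments $\Delta(x)=P'(P^{-1}(x))$: processing at speed $P^{-1}(n/m)$ then decreases $\Phi_1$ at rate roughly $\Delta(n/m)\,P^{-1}(n/m)\cdot m \approx \alpha n$, which is linear in $n$ and can absorb the holding cost, with the cross terms controlled by the Young's-inequality-type bound of Lemma~\ref{lem:bansal}. Two further problems: your denominator $\max(\tilde s(t),P^{-1}(1))$ depends on $\opt$'s instantaneous speed profile, which is an arbitrary measurable function, so $\Phi$ acquires positive jump discontinuities (violating boundary condition~2) and an unaccounted $\frac{d}{dt}\bigl(1/\max(\tilde s(t),P^{-1}(1))\bigr)$ term in the drift; and a single potential does not handle the low-load regime $n<m$, where the floor speed $P^{-1}(1)$ forces the paper to add the separate linear term $\Phi_2$ with coefficient $c_2 = 2/P^{-1}(1)$ to generate the needed negative drift $-c_2\min(m,n)P^{-1}(1)$. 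As written, your step (iv) is asserted rather than proved, and with this choice of $\Phi$ it is false.
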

Taking $P(s) = s^{\alpha}$ for $\alpha >1,$ the competitive ratio
equals $4(2-1/m)^{\alpha}.$ To prove Theorem \ref{thm:srptimproved},
we use a potential function argument, where the potential function is
defined as follows.  Let $n_o(t,q)$ and $n(t,q)$ denote the number of
unfinished jobs under $\opt$ and the algorithm, respectively, with
remaining size at least $q$. In particular, $n_o(t,0) = n_o(t)$ and
$n(t,0) = n(t)$. Let
$$d(t,q) = \max\left\{0, \frac{n(t,q) - n_o(t,q)}{m}\right\}.$$ 
\revision{
We consider the potential function
\begin{equation}\label{defn:phi}
\Phi(t) = \Phi_1(t)+\Phi_2(t),
\end{equation}
where $\Phi_1(\cdot)$ and $\Phi_2(\cdot)$ are defined as follows.}
\begin{equation*}\label{defn:phi1} 
  \Phi_1(t) = c_1\int_{0}^\infty f\left(d(t,q)\right) dq,
\end{equation*}
\revision{where $c_1$ is a positive constant whose value will be
  specified later. The function $f:\{i/m:\ i \in \Z_+\} \ra \R_+$ is
  defined as follows: $f(0) = 0,$ and for $i \geq 1,$ $$f(\frac{i}{m})
  = f(\frac{i-1}{m}) + \Delta(\frac{i}{m}),$$ where $\Delta(x) :=
  P'(P^{-1}(x)).$}
%$\forall \ i\ge 1$, $\Delta(\frac{i}{m}) =
%f(\frac{i}{m})-f(\frac{i-1}{m}) = P'(P^{-1}(\frac{i}{m}))$ (this means
%$P'(x)$ where $x=P^{-1}(\frac{i}{m})$), and
\begin{equation*}\label{defn:phi2}
\Phi_2(t) = c_2 \int_{0}^\infty (n(t,q) - n_o(t,q)) dq,
\end{equation*}
\revision{where $c_2$ is another positive constant whose value will be
  specified later.}

The $\Phi_1(t)$ part of the potential function is a multi-server
generalization of the potential function
in~\cite{bansal2009speedconf}, while the $\Phi_2(t)$ part is novel.
\revision{Specifically, $\Phi_2(t)$ ensures that the potential
  function has the requisite `drift' (see the
  condition~\eqref{eq:mothereq} below) at times when the algorithm has
  fewer than $m$ servers active.}
  
Let the speed of job $k \in O(t)$ under $\opt$ at time~$t$
  be ${\tilde s}_k(t)$. Suppose we can show that for any input
  sequence $\sigma,$
\begin{equation}\label{eq:mothereq} 
  n(t) + \sum_{k \in A(t)} P(s_k(t)) + \frac{d\Phi(t)}{dt} \le
  c\biggl(n_o(t) + \sum_{k\in O(t)} P({\tilde s}_k(t))\biggr)
\end{equation} 
almost everywhere \revision{(in $t,$ with respect to the Lebesgue
  measure)}, where $c > 0$ does not depend of $\sigma,$ and that
$\Phi(t)$ satisfies the following {\it boundary} conditions (proved in
Proposition \ref{prop:boundary}; see Appendix~\ref{app:boundary}):
\begin{enumerate}
\item Before any job arrives and after all jobs are finished,
  $\Phi(t)= 0$, and
\item $\Phi(t)$ does not have a positive jump discontinuity at any
  point of non-differentiability.
  % increase on any job arrival or departure by the algorithm or the
  % $\opt$, i.e., there are no positive jumps at discontinuities in
  % $\Phi(t)$.
\end{enumerate} 
Then, integrating \eqref{eq:mothereq} with respect to $t$, we get that
\revision{
\begin{align*}
  \int \biggl(n(t) &+ \sum_{k\in A(t)}
  P(s_k(t)) \biggr) dt \\
  &\le \int c\biggl(n_o(t) + \sum_{k \in O(t)} P({\tilde s}_k(t))\biggr) dt,
\end{align*}} 
which is equivalent to showing that
$C_{SRPT}(\sigma) \le c \ C_{\opt}(\sigma)$ for any input
$\sigma$ as required.

The intuition for the form of the competitive ratio in
Theorem~\ref{thm:srptimproved} is as follows.
\revision{\begin{definition}[Speed Augmentation] Consider the problem of minimizing flow time 
$m$ servers that have fixed speed. With speed augmentation, an online algorithm is allowed to use $m$ servers with fixed speed $\beta>1$, while the $\opt$ is restricted to use $m$ servers with fixed speed $1$. 
\end{definition}
The following result summarizes the power of speed augmentation with the SRPT algorithm for minimizing the flow time
\begin{lemma}\label{lem:optsrpt} \cite{phillips1997optimal} 
For flow time minimization problem with $m$ servers with fixed speed, the flow time of SRPT algorithm with speed augmentation of $\beta=(2-1/m)$ is at most the flow time of the $\opt$ with $m$ speed-$1$ servers.
\end{lemma}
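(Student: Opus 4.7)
The plan is to reproduce the classical resource-augmentation argument of Phillips, Stein, Torng, and Wein for SRPT on parallel machines. The strategy is a coupling-based pointwise comparison of the two schedules' cumulative remaining-work profiles, exploiting the greedy structure of SRPT against OPT's possibly adversarial allocation.

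First, for any algorithm $A$, I would define the cumulative profile
\begin{equation*}
V_A(t, x) \;=\; \sum_{j\,:\, r_j^A(t) \,\ge\, x} r_j^A(t), \qquad x \ge 0,
\end{equation*}
representing the total remaining processing time of jobs alive under $A$ at time $t$ whose remaining size is at least $x$. The central technical claim to prove is the pointwise inequality $V_{\mathrm{SRPT}_\beta}(t, x) \le V_{\mathrm{OPT}}(t, x)$ for all $t \ge 0$ and $x \ge 0$, established by monitoring the gap $V_{\mathrm{OPT}}(t,x) - V_{\mathrm{SRPT}_\beta}(t,x)$ over time. Arrivals contribute identically to both sides and hence preserve the inequality. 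Between events, the rate at which $V_A(t, x)$ decreases equals the aggregate service rate $A$ applies to jobs of remaining size $\ge x$; the SRPT rule guarantees that whenever a job of remaining size $\ge x$ is \emph{not} being served by $\mathrm{SRPT}_\beta$, all $m$ of its servers are occupied by strictly shorter jobs. A careful rate comparison at a boundary instant where the gap is about to become positive shows that $\beta = 2 - 1/m$ is precisely what is needed so that $\mathrm{SRPT}_\beta$'s drain on $V(\cdot, x)$ matches OPT's worst-case drain rate. Once the profile dominance is in hand, a standard argument shows that the sorted sequence of remaining sizes under $\mathrm{SRPT}_\beta$ is componentwise dominated by that of OPT, yielding $n_{\mathrm{SRPT}_\beta}(t) \le n_{\mathrm{OPT}}(t)$ pointwise; integrating over $t$ delivers the flow-time bound $F_{\mathrm{SRPT}_\beta} \le F_{\mathrm{OPT}}$.

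The hard part is the rate-comparison step. In the worst case, up to $m - 1$ of $\mathrm{SRPT}_\beta$'s servers may be busy on jobs with remaining size strictly less than $x$, and hence not drain $V(\cdot, x)$ directly, while OPT applies all $m$ of its unit-speed servers to jobs above the threshold. A naive pointwise-in-$x$ rate comparison would thus seemingly demand a speedup on the order of $m$. The resolution is that $\mathrm{SRPT}_\beta$'s short-job work aggressively drains $V(\cdot, x')$ at lower thresholds $x' < x$, and a careful joint accounting across all thresholds---or, equivalently, a potential-function argument that aggregates the profile comparison over $x$---produces the sharp constant $2 - 1/m = 1 + (m-1)/m$, where the additive term $(m-1)/m$ is exactly the compensation for the worst-case $(m-1)$-server misalignment between the two schedules. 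Handling arrivals, completions, and the measure-zero set of non-differentiable instants is technically routine.
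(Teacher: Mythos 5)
The paper offers no proof of this lemma---it is imported verbatim from \cite{phillips1997optimal}---so your reconstruction has to stand on its own, and it does not. The central technical claim, that $V_{\mathrm{SRPT}_\beta}(t,x)\le V_{\mathrm{OPT}}(t,x)$ for all $t,x$ with $V_A(t,x)$ the remaining work among jobs of remaining size \emph{at least} $x$, is false for $\beta=2-1/m$. Take $m=2$, $\beta=3/2$, jobs of sizes $0.5,1,1$ arriving together, and threshold $x=0.9$: both profiles start at $2$, but $\mathrm{SRPT}_\beta$ has one server on the below-threshold job and so drains $V(\cdot,x)$ at rate $3/2$, while a unit-speed schedule serving the two size-$1$ jobs drains it at rate $2$; the gap becomes negative immediately. (Nothing in your induction uses optimality of the comparison schedule, so you cannot exclude such behavior; in general winning this pointwise-in-$x$ race requires speed $m/(m-1)$, which exceeds $2-1/m$ for $m=2$.) You flag exactly this $(m-1)$-server misalignment and defer it to ``a careful joint accounting across all thresholds,'' but that accounting \emph{is} the lemma, and once you aggregate over $x$ you no longer have the pointwise profile dominance that your next step consumes.

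The second, independent gap is the passage from profile dominance to $n_{\mathrm{SRPT}_\beta}(t)\le n_{\mathrm{OPT}}(t)$. Even if the invariant held, it would not imply the count bound: if $\mathrm{SRPT}_\beta$ holds three jobs of remaining size $1$ and $\opt$ holds one job of remaining size $10$, then $V_{\mathrm{SRPT}_\beta}(t,x)\le V_{\mathrm{OPT}}(t,x)$ for every $x\ge 0$, yet $3>1$; componentwise dominance of the sorted remaining sizes fails as well. Controlling the work residing in \emph{large} jobs says nothing about how many \emph{small} jobs the algorithm carries, and SRPT's backlog consists precisely of small remaining sizes---which is why the volume invariants actually used in SRPT analyses (Leonardi--Raz style, and the argument in \cite{phillips1997optimal}) are phrased in terms of work among jobs with remaining size \emph{at most} a threshold, the opposite tail from yours. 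Your final step (integrating a pointwise count bound over $t$) is fine, but neither of the two claims feeding into it is established.
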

\begin{rem} In this paper, we do not consider speed augmentation, but
  speed scaling, where both the online algorithm and the $\opt$ have
  access to identical servers, but whose speeds can be varied with an
  associated energy cost. In Lemma \ref{lem:OS}, we show how we can
  use the speed augmentation result (Lemma~\ref{lem:optsrpt}) for
  analyzing the speed scaling problem.
\end{rem}

\begin{definition}\label{defn:OS} Consider an SRPT-restricted optimal
  offline algorithm for minimizing the flow time plus energy with
  variable speed servers \eqref{eq:cost}, that we call $\opt$-$SRPT$
  ($OS$).  $OS$ has non-causal knowledge of the complete job arrival
  sequence $\sigma$, but is required to follow the SRPT discipline at
  each time $t$, i.e., schedule the $m$ jobs with the shortest
  remaining sizes.  Given the SRPT restriction, $OS$ can choose speeds
  at each time that minimizes the flow time plus energy
  \eqref{eq:cost} knowing the future job arrival sequence.
\end{definition}

Note that we neither know the optimal speed choice of the $\opt$ nor
of the $OS$ algorithm.
\begin{lemma}\label{lem:OS} Algorithm $OS$ is $P(2-1/m)$-competitive
  with respect to $\opt$ for minimizing \eqref{eq:cost}.
\end{lemma}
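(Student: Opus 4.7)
The plan is to exhibit an SRPT-restricted speed-scaling policy whose total cost is at most $P(2-1/m)\, C_{\opt}(\sigma);$ since $OS$ is, by definition, the \emph{optimal} SRPT-restricted policy, this will immediately yield $C_{OS}(\sigma) \leq P(2-1/m)\, C_{\opt}(\sigma).$ To construct such a policy, I would ``speed up'' $\opt$'s schedule by the Phillips et al.\ factor. Concretely, let $s^*_1(t),\dots,s^*_m(t)$ denote the $m$ server speeds used by $\opt$ at time $t$ (with $s^*_k(t)=0$ for idle servers), and consider an auxiliary algorithm $A$ that follows the SRPT discipline (serving the $\min\{m, n_A(t)\}$ shortest remaining jobs at each time), with its server $k$ running at speed $(2-1/m)\, s^*_k(t)$ whenever it has a job assigned and at speed $0$ otherwise.

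The first step is to establish $F_A \leq F_{\opt}.$ This requires a time-varying extension of Lemma~\ref{lem:optsrpt}: whenever $A$ is fully loaded (i.e., $n_A(t) \geq m$), $A$'s aggregate processing capacity equals $(2-1/m) \sum_k s^*_k(t),$ which is precisely the $(2-1/m)$-speed-augmented version of $\opt$'s capacity; when $n_A(t) < m,$ $A$ has already processed sufficiently many jobs that it cannot fall behind $\opt$. The original Phillips et al.\ exchange argument is combinatorial and relies only on the ratio of the two speed profiles and on the availability of job migration (for work conservation), so it adapts essentially verbatim to this time-varying setting. Carrying out this adaptation rigorously is the main technical obstacle of the proof.

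Given the flow-time bound, the energy bound follows directly from Assumption~\ref{ass:new}: at each time $t,$ the set of active servers in $A$ is a subset of $\{1,\dots,m\},$ hence
\begin{equation*}
  E_A \leq \int \sum_{k=1}^m P\bigl((2-1/m)\, s^*_k(t)\bigr)\, dt
  \leq P(2-1/m) \int \sum_{k=1}^m P\bigl(s^*_k(t)\bigr)\, dt
  = P(2-1/m)\, E_{\opt}.
\end{equation*}
Noting that $P(2-1/m) \geq P(1) \geq 1$ (the latter follows from Assumption~\ref{ass:new} with $x=y=1$ together with the strict positivity of $P$ on $(0,\infty)$), I would combine the two estimates:
\begin{equation*}
  C_A = F_A + E_A \leq F_{\opt} + P(2-1/m)\, E_{\opt}
  \leq P(2-1/m)\, (F_{\opt} + E_{\opt}) = P(2-1/m)\, C_{\opt}.
\end{equation*}
Since $OS$ is optimal over all SRPT-restricted schedules, $C_{OS} \leq C_A,$ completing the proof.
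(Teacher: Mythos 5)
Your proposal is correct and follows essentially the same route as the paper: construct an SRPT-restricted schedule that runs at $(2-1/m)$ times $\opt$'s speeds, invoke Lemma~\ref{lem:optsrpt} to bound its flow time by $F_{\opt}$, use Assumption~\ref{ass:new} to bound its energy by $P(2-1/m)E_{\opt}$, and conclude by optimality of $OS$ among SRPT-restricted policies. You are in fact more explicit than the paper about the delicate points—the time-varying extension of the fixed-speed augmentation lemma, the role of Assumption~\ref{ass:new}, and the need for $P(2-1/m)\geq 1$—all of which the paper's one-paragraph proof glosses over.
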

\begin{proof}
Let the speed at time $t$ chosen by $\opt$ be $s_{\opt}(t)$.
The claim follows from Lemma~\ref{lem:optsrpt}, since $OS$ can choose speed $(2-1/m)s_\opt(t)$ at time $t$, and get exactly the same flow-time as the $\opt$. Increasing speed for $OS$ only results in an extra multiplicative energy cost of $P(2-1/m)$ over $\opt$.  
Thus, the flow time plus energy cost of $OS$ is at most $P(2-1/m)$ times the flow time plus energy cost of the $\opt$.
\end{proof}

For proving Theorem \ref{thm:srptimproved}, we will show that  the SRPT algorithm with speed scaling \eqref{eq:speeddef} is $c$ competitive with respect to $OS$ for some $c$, which implies that  the SRPT algorithm with speed scaling \eqref{eq:speeddef} is $c P(2-1/m)$ competitive with respect to $\opt$ itself.
}

%\begin{rem}\label{rem:proofsketch}
%  For proving Theorem \ref{thm:srptimproved} via showing that
%  \eqref{eq:mothereq} is true for some $c$, we assume that $\opt$ also
%  uses SRPT with arbitrary speeds at time $t$ that can depend on
%  future job arrivals, since enforcing $\opt$ to use SRPT helps in
%  proving~\eqref{eq:mothereq}. From Lemma~\ref{lem:optsrpt}, it
%  follows that with speed scaling, $\opt$-SRPT ($\opt$ that is
%  constrained to perform SRPT scheduling) is $P(2-1/m)$-competitive
%  with respect to $\opt$, since $\opt$ following SRPT can scale the
%  speed up by a factor $(2-1/m)$ at all times, and get exactly the
%  same flow-time as the $\opt$, by paying an extra multiplicative
%  energy cost of $(2-1/m)$.  Therefore, we show that SRPT with speed
%  scaling as in \eqref{eq:speeddef} is $c$-competitive with respect to
%  $\opt$-SRPT by showing \eqref{eq:mothereq}, to get the final result
%  that it is $c(2-1/m)$-competitive with respect to $\opt$ itself.
%\end{rem}

For smaller values of $\alpha,$ the result of Theorem
\ref{thm:srptimproved} can be further improved for the special case of
power-law power functions, as described in the next theorem.
\begin{theorem}\label{thm:srpt} With $P(s) = s^\alpha$ and for any
  $\alpha \in (1,2)$, the SRPT-based algorithm with speed scaling
  \eqref{eq:speeddef} is $c$-competitive, where
  $$c = 3 + \frac{2}{2-\alpha}.$$
\end{theorem}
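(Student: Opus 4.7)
The plan is to retain the potential-function framework used to prove Theorem~\ref{thm:srptimproved}, but to carry out a sharper drift analysis that exploits the monomial form $P(s) = s^\alpha$ and thereby avoids the $P(2-1/m)$ factor incurred there through the reduction to $OS$. With $P(s) = s^\alpha$ one has $P^{-1}(x) = x^{1/\alpha}$, $\Delta(x) = \alpha x^{(\alpha-1)/\alpha}$, and $\bar{s}=1$, so every ingredient of a potential of the form $\Phi = \Phi_1 + \Phi_2$ (defined as in Theorem~\ref{thm:srptimproved}, but with $c_1, c_2$ left as free parameters to be tuned at the end) has an explicit closed form.

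I would then establish the pointwise drift inequality
\begin{equation*}
  n(t) + \sum_{k \in A(t)} s_k(t)^\alpha + \frac{d\Phi(t)}{dt} \;\le\; c \Bigl( n_o(t) + \sum_{k \in O(t)} \tilde s_k(t)^\alpha \Bigr)
\end{equation*}
by decomposing $d\Phi/dt$ into three contributions: (i) arrivals, which cancel since SRPT and $\opt$ see identical releases and both $\Phi_1$ and $\Phi_2$ depend only on $n(t,q)-n_o(t,q)$; (ii) the SRPT algorithm's own processing, where \eqref{eq:speeddef} makes $\sum_{k \in A(t)} s_k(t)^\alpha = n(t)$ whenever $n(t) \ge m$, so that the drift of $\Phi_1$ absorbs $n(t)$ while $\Phi_2$ handles the boundary regime $n(t) < m$; and (iii) $\opt$'s processing of a job of remaining size $r$ at speed $\tilde s_k$, which contributes approximately $c_1 \tilde s_k \,\Delta(d(t,r)+1/m)$ to $d\Phi_1/dt$ and $-c_2 \tilde s_k$ to $d\Phi_2/dt$.

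The heart of the argument is converting the step-(iii) term into something proportional to $\tilde s_k^\alpha$. Applying Young's inequality with conjugate exponents $\alpha$ and $\alpha/(\alpha-1)$ gives, for any $\epsilon > 0$,
\begin{equation*}
  \tilde s_k \cdot \Delta(d) \;\le\; \epsilon\, \tilde s_k^\alpha + C(\alpha,\epsilon)\, \Delta(d)^{\alpha/(\alpha-1)},
\end{equation*}
and substituting $\Delta(d) = \alpha d^{(\alpha-1)/\alpha}$ turns the last term into a constant multiple of $d$, which after integrating over $q$ collapses into a multiple of $(n_o(t)-n(t))_+$ that is controllable by the drift of $\Phi_2$ (supplemented by $n_o(t)$ on the right-hand side). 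This is precisely where the restriction $\alpha < 2$ enters: for $\alpha \ge 2$ the Young-conjugate exponent pushes $d$'s exponent above~$1$ and this collapse fails. The main obstacle will be tuning $\epsilon$, $c_1$, and $c_2$ so that the coefficient of $\tilde s_k^\alpha$ on the right-hand side is exactly $3 + 2/(2-\alpha)$ while the coefficients of $n(t)$ and $n_o(t)$ both pinch to~$1$; the blow-up as $\alpha \uparrow 2$ reflects the fact that the optimal $c_1$ scales like $1/(2-\alpha)$. Once the drift inequality is in hand, the boundary conditions of Proposition~\ref{prop:boundary} transfer essentially verbatim, and integration yields Theorem~\ref{thm:srpt}.
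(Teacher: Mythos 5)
Your overall plan coincides with the paper's own proof: the same potential $\Phi=\Phi_1+\Phi_2$, the same decomposition in which the algorithm's and $\opt$'s processing contributions to $d\Phi_1/dt$ are bounded \emph{separately} (rather than matched up-tick to down-tick as in the proof of Theorem~\ref{thm:srptimproved}), and $\Phi_2$ handling the regime $n<m$. Moreover, your Young's-inequality step is not really an alternative device: for $P(s)=s^\alpha$ the (unweighted) Young bound $\tilde s_k\,\Delta(x)\le \tilde s_k^{\alpha}+(\alpha-1)x$ is precisely Lemma~\ref{lem:bansal} specialized to $s_k=0$, which is exactly what the paper applies to $\opt$'s terms. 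So the route is essentially the same, and the constants $c_1=2/(2-\alpha)$, $c_2=3$ do come out as you anticipate.

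There is, however, a concrete error in your diagnosis of where $\alpha<2$ enters, and since you call this step the heart of the argument it needs correcting. With $\Delta(d)=\alpha d^{(\alpha-1)/\alpha}$ one has $\Delta(d)^{\alpha/(\alpha-1)}=\alpha^{\alpha/(\alpha-1)}\,d$ for \emph{every} $\alpha>1$: the conjugate exponent always returns $d$ to the first power, so the ``collapse'' you describe never fails, and the restriction cannot come from the exponent of $d$. What actually breaks for $\alpha\ge 2$ is the \emph{coefficient}. After applying the bound, $\opt$'s processing contributes roughly $+c_1(\alpha-1)\sum_{k=1}^{m}(n+1-k)/m$ to $d\Phi_1/dt$, while the algorithm's processing contributes roughly $-c_1\sum_{k=1}^{m}(n+1-k-n_o)/m$; the net is $c_1 n_o - c_1(2-\alpha)\bigl(n-(m-1)/2\bigr)$, and only for $\alpha<2$ is the factor $(2-\alpha)$ positive, so that the negative drift can absorb the algorithm's instantaneous holding cost $2n$ (forcing $c_1=2/(2-\alpha)$, whence the blow-up as $\alpha\ua 2$). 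Relatedly, the surviving quantity is not $(n_o(t)-n(t))_+$ but $n_o$ itself (with coefficient $c_1$) plus this signed multiple of $n-(m-1)/2$; the residual $(m-1)/2$ is what $c_2\min(m,n)$ from $d\Phi_2/dt$ must dominate in the case $n\ge m$. With the mechanism corrected in this way, your outline matches the paper's proof and goes through.
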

The proof of Theorem~\ref{thm:srpt} is similar in spirit to that of
Theorem~\ref{thm:srptimproved}, but without assuming that $\opt$
follows SRPT. It also uses the same potential function $\Phi$ (see
\eqref{defn:phi}), and directly tries to bound the increase in $\Phi$
because of processing of the jobs by the algorithm and $\opt$. The
limitation on $\alpha$ appears because without enforcing that $\opt$
follows SRPT, we cannot apply a technical lemma (Lemma
\ref{lem:bansal}) jointly on the change made to $\Phi$ by the
algorithm and the $\opt$, but individually. \revision{While the
  competitive ratio stated in Theorem~\ref{thm:srpt} grows unboundedly
  as $\alpha \ua 2$, it is less than the competitive ratio established
  in Theorem~\ref{thm:srptimproved} when $\alpha$ is close to 1. This
  is because our argument for Theorem~\ref{thm:srpt} does not enforce
  $\opt$ to follow SRPT, thereby saving on the penalty of
  $(2-1/m)^{\alpha}.$} The proof of Theorem~\ref{thm:srpt} is provided
in Appendix~\ref{app:srpt_2}, while the remainder of this section is
devoted to the proof of Theorem~\ref{thm:srptimproved}.

\begin{proof}[Proof of Theorem~\ref{thm:srptimproved}]
  \revision{To prove the theorem we will show that  the SRPT algorithm with speed scaling \eqref{eq:speeddef} is $c$-competitive with respect to 
  the $OS$ algorithm (Definition \ref{defn:OS}), which together with Lemma \ref{lem:OS} implies that  the SRPT algorithm with speed scaling \eqref{eq:speeddef} is $cP(2-1/m)$ competitive with respect to $\opt$. For the rest of the proof, subscript $o$ refers to quantities for $OS$ algorithm.} 

In the following, we show that \eqref{eq:mothereq} is true for a
suitable choice of $c.$ To show \eqref{eq:mothereq}, we bound
$d\Phi/dt$ via individually bounding $d\Phi_1/dt$ and $d\Phi_2/dt$ in
Lemmas~\ref{lemma:phi1_OPT-SRPT} and~\ref{lemma:phi2_OPT-SRPT}
below. Note that it suffices to show that \eqref{eq:mothereq} holds at
any instant~$t$ which is not an arrival or departure instant under the
algorithm or $OS$. For the remainder of this proof, consider any
such time instant $t.$ For ease of exposition, we drop the index $t$
from $n(t,q), n(t,q_o), n(t), n_o(t), s_k(t)$ and ${\tilde s}_k(t),$
since only a fixed (though generic) time instant $t$ is under
consideration.

\begin{lemma}
  \label{lemma:phi1_OPT-SRPT}
  For $n \geq m$,
  \begin{align*}
    d\Phi_1/dt \le  &c_1 n_o - c_1 n + c_1 \left(\frac{m-1}{2}\right) + c_1 \sum_{k \in O} P(\tilde{s}_k),
  \end{align*} 
  while for $n < m,$
  \begin{align*}
    d\Phi_1/dt &\le c_1 n_o - c_1 \frac{n(n+1)}{2m} + c_1 \sum_{k \in O} P(\tilde{s}_k)
  \end{align*}
\end{lemma}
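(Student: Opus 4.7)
The plan is to decompose $\frac{d\Phi_1}{dt}$ at any time $t$ that is neither an arrival nor a departure instant into two contributions: $\Delta^A$, from the algorithm's processing, and $\Delta^O$, from $OS$'s processing. For each job $j$ processed by the algorithm at speed $s_j$ with remaining size $w_j$, the jump of $n(t,\cdot)$ at $q=w_j$ shifts leftward at rate $s_j$, so that $d(t,q)$ decreases by $1/m$ in an expanding sliver just below $w_j$ wherever $d$ remains positive; by the construction of $f$ this contributes $-c_1 s_j \Delta(d(t,w_j^-))\mathbf{1}\{d(t,w_j^-)>0\}$ to $\frac{d\Phi_1}{dt}$. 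Symmetrically, each $OS$ job $k$ processed at speed $\tilde{s}_k$ with remaining size $v_k$ shifts the step of $n_o(t,\cdot)$ at $q=v_k$ leftward and \emph{increases} $d(t,q)$ by $1/m$ in the analogous sliver, contributing $+c_1 \tilde{s}_k \Delta(d(t,v_k^+))\mathbf{1}\{d(t,v_k^+)>0\}$.

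To bound $\Delta^A$ in the case $n\ge m$, observe that all $m$ processed jobs run at speed $P^{-1}(n/m)$. Ordering the algorithm's remaining sizes $w_1\le\cdots\le w_n$ gives $n(t,w_k^-)=n-k+1$, so $d(t,w_k^-)\le(n-k+1)/m\le n/m$. Hence $P^{-1}(n/m)\ge P^{-1}(d(t,w_k^-))$, and convexity of $P$ with $P(0)=0$ (which yields $P^{-1}(x)\Delta(x)\ge x$) gives $s_j\Delta(d(t,w_k^-))\ge d(t,w_k^-)$. Summing, $\Delta^A\le -c_1\sum_{k=1}^m d(t,w_k^-)$. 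I would then use the pointwise bound $n_o(t,q)\le n_o$ to get $d(t,w_k^-)\ge\max\{0,(n-k+1-n_o)/m\}$, and a direct case analysis on $n-n_o$ shows $\sum_{k=1}^m d(t,w_k^-)\ge (n-n_o)-(m-1)/2$ whenever the right-hand side is positive (the inequality being otherwise trivial). The $n<m$ case uses speed $P^{-1}(1)$ combined with $d(t,w_k^-)\le n/m<1$ to invoke the same convexity inequality, after which the analogous summation yields $\sum_{k=1}^n d(t,w_k^-)\ge n(n+1)/(2m)-n_o$.

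For $\Delta^O$, the sum $c_1\sum_{k\in O}\tilde{s}_k\Delta(d(t,v_k^+))$ must be bounded by $c_1\sum_{k\in O}P(\tilde{s}_k)$ with no residual $d$-term. This is the point at which the technical Young-type inequality of Lemma~\ref{lem:bansal}—derived from the convexity of $P$ together with the sub-multiplicativity in Assumption~\ref{ass:new}—is applied \emph{jointly} with the algorithm's side. A naive pointwise bound of the form $\tilde{s}_k\Delta(d)\le P(\tilde{s}_k)+\kappa\cdot d$ would leave a residual that the algorithm's $-c_1\sum d(t,w_k^-)$ alone cannot absorb, so the joint accounting uses the algorithm's richer expression $-c_1 P^{-1}(n/m)\sum\Delta(d(t,w_k^-))$ together with Assumption~\ref{ass:new} to perform the bookkeeping. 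Combining the resulting bounds on $\Delta^A$ and $\Delta^O$ then yields the two stated inequalities.

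The main obstacle will be this joint Young-type argument for $\Delta^O$: the pointwise version leaves a $d$-residual that does not generally cancel, and closing the gap uses Assumption~\ref{ass:new} in an essential way. A secondary, more routine difficulty is the case analysis required by the $\max(0,\cdot)$ truncation in $d(t,q)$ when establishing the combinatorial lower bounds on $\sum d(t,w_k^-)$, particularly near the boundary $n-n_o\approx m$ where both pieces of the case analysis just barely agree.
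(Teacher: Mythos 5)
Your decomposition into the algorithm's contribution and $OS$'s contribution is the right setup, and your treatment of the algorithm's side (using $s_k \geq P^{-1}(d)$ and $P^{-1}(x)\Delta(x)\ge x$, then the combinatorial lower bound on $\sum_k d(t,w_k^-)$) matches the paper's. But the step you flag as ``the main obstacle'' is in fact a genuine gap: you never supply the mechanism that eliminates the residual from the $OS$ terms, and the mechanism you gesture at --- Assumption~\ref{ass:new} --- is not the right one. Assumption~\ref{ass:new} plays no role in this lemma at all; it is used only in Lemma~\ref{lem:OS} to relate $OS$ to $\opt$. The lemma is proved purely from Lemma~\ref{lem:bansal} (convexity) plus a combinatorial matching.

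The missing idea is the following. Because \emph{both} the algorithm and $OS$ serve their respective $r$ (resp.\ $m$) shortest jobs, one truncates the profiles to $\tilde{n}(q)=\max(n(q),n-r)$ and $\tilde{n}_o(q)=\max(n_o(q),n_o-r)$, where $r$ is the number of jobs $OS$ serves, so that $g(q)=\tilde{n}(q)-\tilde{n}_o(q)$ equals $n-n_o$ at both $q=0$ and $q\to\infty$, has exactly $r$ unit down-jumps at the algorithm's served sizes $q(1),\dots,q(r)$ and $r$ unit up-jumps at $OS$'s served sizes $q_o(1),\dots,q_o(r)$. A balanced-excursion argument then matches each down-jump (from level $l$ to $l-1$, at some $q(i)$) to a unique up-jump (from $l-1$ to $l$, at some $q_o(j)$), giving $g(q(i))\ge g(q_o(j))+1$. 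Since $\Delta$ is increasing, the $OS$ term $\Delta\bigl((g(q_o(j))+1)/m\bigr)\tilde{s}_j$ is dominated by $\Delta\bigl(g(q(i))/m\bigr)\tilde{s}_j$, so each pair collapses to $\Delta\bigl(g(q(i))/m\bigr)(-s_i+\tilde{s}_j)$, to which Lemma~\ref{lem:bansal} applies \emph{once per pair} with $x=g(q(i))/m$ and $s_i=P^{-1}(n/m)\ge P^{-1}(x)$, yielding $P(\tilde{s}_j)-g(q(i))/m$ --- exactly the negative residual your $\Delta^A$ accounting needs, with nothing left over. Without this pairing (i.e., bounding the $OS$ terms pointwise as you describe), one is stuck with a residual of order $(\alpha-1)\sum_k d$, which is why the paper's Theorem~\ref{thm:srpt} only works for $\alpha<2$; the matching is precisely what Theorem~\ref{thm:srptimproved} buys by restricting the benchmark to the SRPT-following $OS$. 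Your proposal, as written, would at best reproduce the $\alpha<2$ bound, not the stated lemma.
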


\begin{lemma}
  \label{lemma:phi2_OPT-SRPT}
  $d\Phi_2/dt \leq -c_2 \min(m,n) P^{-1}(1)$
  \begin{align*}
    %d\Phi_2/dt \leq
    \qquad \qquad  +  c_2\sum_{k \in O} \max\{P(\bar{s}), P({\tilde s}_k)\}
  \end{align*}
\end{lemma}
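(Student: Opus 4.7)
The plan is to begin by observing that $\int_0^\infty n(t,q)\, dq$ is exactly the total remaining work in SRPT's queue at time $t$: a job of residual size $r$ is counted in $n(t,q)$ for every $q \in [0,r]$ and therefore contributes exactly $r$ to the integral. Writing $W(t)$ and $W_o(t)$ for the total remaining work under the algorithm and under $OS$ respectively, this yields the compact representation $\Phi_2(t) = c_2\bigl(W(t) - W_o(t)\bigr)$. Between event instants, arrivals do not create any jump in $\Phi_2$ (a newly released job enters both queues with the same size), so I would only need to track the service-induced drift, which gives
\begin{equation*}
\frac{d\Phi_2}{dt} = -c_2 \sum_{k \in A(t)} s_k(t) \;+\; c_2 \sum_{k \in O(t)} \tilde{s}_k(t).
\end{equation*}

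Next I would lower-bound SRPT's aggregate service rate using the speed rule \eqref{eq:speeddef}. When $n \geq m$, each of the $m$ servers runs at $P^{-1}(n/m) \geq P^{-1}(1)$; when $n < m$, the $n$ active servers each run at $P^{-1}(1)$. In either case,
\begin{equation*}
\sum_{k \in A(t)} s_k(t) \;\geq\; \min(m,n)\, P^{-1}(1),
\end{equation*}
which matches the first term of the desired bound exactly.

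For the $OS$ side, the task is to estimate each individual speed $\tilde{s}_k$ by a quantity involving its instantaneous power $P(\tilde{s}_k)$. Here I would invoke the definition $\bar{s} := \inf\{s>0 : P(s) > s\}$: continuity and strict convexity of $P$ (Assumption~\ref{ass:Pconvex}) force $P(\bar{s}) = \bar{s}$ and $P(s) \geq s$ for all $s \geq \bar{s}$. Consequently, for any $s \geq 0$, either $s \geq \bar{s}$ so that $s \leq P(s)$, or $s < \bar{s} = P(\bar{s})$; in both cases $s \leq \max\{P(\bar{s}), P(s)\}$. Applying this pointwise to each $\tilde{s}_k$ and summing over $k \in O(t)$ produces the second term of the inequality.

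I do not anticipate any substantial obstacle. The only subtlety is the short argument that $P(\bar{s}) = \bar{s}$, which follows routinely from continuity of $P$ and the definition of $\bar{s}$ as an infimum (strict inequality on either side of $\bar{s}$ would contradict either the infimum property or the continuity of $P$). Beyond that, Lemma~\ref{lemma:phi2_OPT-SRPT} is essentially a work-conservation identity for $\Phi_2$ combined with the two elementary pointwise estimates above, one for the algorithm's speeds and one for $OS$'s speeds.
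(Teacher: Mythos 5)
Your proposal is correct and follows essentially the same route as the paper: identify $\Phi_2$ with $c_2$ times the difference in remaining work so that $d\Phi_2/dt = -c_2\sum_{k\in A}s_k + c_2\sum_{k\in O}\tilde{s}_k$, lower-bound the algorithm's aggregate speed by $\min(m,n)P^{-1}(1)$ via the speed rule, and bound each $\tilde{s}_k$ by $\max\{P(\bar{s}),P(\tilde{s}_k)\}$ using the definition of $\bar{s}$. Your explicit justification that $P(\bar{s})=\bar{s}$ is a small refinement the paper leaves implicit, but the argument is the same.
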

\revision{\begin{rem}\label{rem:drift} The basic idea behind the
    expressions derived in Lemmas~\ref{lemma:phi1_OPT-SRPT} and
    \ref{lemma:phi2_OPT-SRPT} is to upper bound the drift or the
    derivative of potential function (which counts the difference of
    work remaining with the proposed algorithm and the $OS$
    algorithm). Both the proposed algorithm and the $OS$ algorithm are
    working on their respective $m$ shortest jobs, and thereby
    decreasing and increasing the potential function, respectively.
    For the specific speed choice, \eqref{eq:speeddef}, we show that
    the rate of change of the potential function is at most a
    (negative) constant times the number of the remaining jobs with
    the proposed algorithm plus a (positive) constant times the
    holding cost of the $OS$ algorithm. Since the instantaneous
    holding cost for the algorithm is $n(t) + P(s(t)) = 2 n(t)$ from
    \eqref{eq:speeddef}, this bound on the drift, as we show next, is
    sufficient to show \eqref{eq:mothereq} for some~$c$.
    % that shows
    % that the competitive ratio is $c$ for the proposed algorithm
    % with
    % respect to the OS algorithm.
\end{rem}}

Using Lemmas~\ref{lemma:phi1_OPT-SRPT} and~\ref{lemma:phi2_OPT-SRPT}
(proved in Appendix~\ref{app:lemmas_main_thm}), we now prove
\eqref{eq:mothereq} by considering the following two cases:

\noindent [{\bf Case 1: $n \geq m.$}] $n + \sum_{k \in A} P(s_k) + d\Phi(t)/dt$
\begin{align*}
  \stackrel{(a)}\le & n + n + c_1 n_o - c_1 n + c_1
  \left(\frac{m-1}{2}\right) + c_1\sum_{k \in O} P({\tilde s}_k)
  \\ &\quad -c_2 m P^{-1}(1) + c_2\sum_{k \in O} \max\{P(\bar{s}),
  P({\tilde s}_k)\} \\ \leq& (c_1+c_2) \sum_{k \in O} P({\tilde s}_k)
  + (c_1 + c_2P(\bar{s})) n_o + n(2-c_1) \\ & \quad +
  \left[c_1\left(\frac{m-1}{2}\right) -c_2mP^{-1}(1) \right]
  \\ \stackrel{(b)}\le & \bigl(c_1 + c_2 \max(1,P(\bar{s}))\bigr)
  \bigl( n_o + \sum_{k \in O} P({\tilde s}_k) \bigr)
\end{align*}
Here, $(a)$ follows from Lemmas~\ref{lemma:phi1_OPT-SRPT} and
\ref{lemma:phi2_OPT-SRPT}, and since $P(s_k) = n/m$ when $n \geq m$
(see \eqref{eq:speeddef}), while $(b)$ follows by setting $c_1 = 2$
and $c_2 \geq 1/P^{-1}(1).$

\noindent [{\bf Case 2: $n < m.$}]
$n + \sum_{k \in A} P(s_k) + d\Phi(t)/dt$
\begin{align*}
  \stackrel{(a)}\le & n + n + c_1 n_o - c_1 \frac{n(n+1)}{2m} + c_1
  \sum_{k \in O} P(\tilde{s}_k) -c_2 n P^{-1}(1) \\ &\quad +
  c_2\sum_{k \in O} \max\{P(\bar{s}), P({\tilde s}_k)\} \\ \leq&
  \bigl(c_1 + c_2 \max(1,P(\bar{s}))\bigr) \bigl( n_o + \sum_{k \in O} P({\tilde s}_k) \bigr) + n(2-c_2
  P^{-1}(1))\\ \stackrel{(b)}\le & \bigl(c_1 + c_2
  \max(1,P(\bar{s}))\bigr) \bigl( n_o + \sum_{k \in O} P({\tilde s}_k)
  \bigr)
\end{align*}
Once again, $(a)$ follows from Lemmas~\ref{lemma:phi1_OPT-SRPT} and
\ref{lemma:phi2_OPT-SRPT}, and since $P(s_k) = 1$ when $n < m$ (see
\eqref{eq:speeddef}), while $(b)$ follows by setting $c_2 =
2/P^{-1}(1).$

This proves \eqref{eq:mothereq} for $$c = c_1 + c_2 \max(1,P(\bar{s}))
= \left(2 + \frac{2}{P^{-1}(1)} \max(1,P(\bar{s}))\right).$$
\end{proof}
In the next section, we consider a special case when all jobs have
unit size, but their arrival instants are still worst case, for which
we can improve the competitive ratio guarantees.

\subsection{Equal Sized Jobs}

Assume that all jobs have equal size, which is taken to be~$1$ without
loss of generality. There are $m$ servers and jobs are assigned on
arrival to one of the $m$ servers for service. $\opt$ refers to the
offline optimal policy.
%The power function $P(s)$ is assumed to be convex and continuous.
We propose the following policy ${\cal U}$. Each job on its arrival is
assigned to servers in a round-robin fashion, and each server $k$ uses
speed $s_k(t) = P^{-1}(n_k(t))$, where $n_k(t)$ is the number of
unfinished jobs that have been assigned to server~$k$.
\begin{theorem} 
  \label{thm:equalsize}
With unit job sizes, under Assumption~\ref{ass:Pconvex}, ${\cal U}$ is
$2$-competitive.
\end{theorem}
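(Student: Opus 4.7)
The key identity is that $\cU$'s speed choice $s_k(t) = P^{-1}(n_k(t))$ yields $P(s_k(t)) = n_k(t)$, so the instantaneous cost at server~$k$ equals $n_k(t) + P(s_k(t)) = 2 n_k(t)$. Summing across servers and integrating gives $C_\cU = 2 F_\cU$, where $F_\cU$ is the total flow time under $\cU$. Consequently, proving $C_\cU \leq 2 C_\opt$ reduces to establishing $F_\cU \leq C_\opt$.

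My plan is to decompose both sides into per-server contributions via the RR routing. Since $\cU$ performs deterministic RR dispatch without migration, each server~$k$ independently runs the single-server version of $\cU$ on the arrival sub-sequence $\sigma_k$ consisting of every $m$-th arrival. Hence $F_\cU = \sum_{k=1}^{m} F_\cU^{(1)}(\sigma_k)$, where $F_\cU^{(1)}(\tau)$ denotes the single-server flow time on arrival sequence $\tau$. I would then establish the identity $C_\opt = \sum_{k=1}^{m} C^{(1),*}(\sigma_k)$, where $C^{(1),*}(\tau)$ denotes the single-server offline optimum on $\tau$. The direction $C_\opt \leq \sum_k C^{(1),*}(\sigma_k)$ is immediate, since RR routing with per-server optimal speed scaling is a feasible policy for $\opt$. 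For the reverse inequality, I would use a symmetrization argument: averaging any offline policy over permutations of the server labels produces (by convexity of $P$) a policy of no greater cost whose routing is symmetric across servers; since migration offers no advantage for unit jobs (the energy cost of processing a unit of work at a given speed is independent of which server executes it), the symmetric policy can be rearranged into one performing RR routing without increasing cost.

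The final ingredient is a single-server lemma: for unit jobs and any arrival sequence $\tau$, the single-server $\cU$ satisfies $F_\cU^{(1)}(\tau) \leq C^{(1),*}(\tau)$ (equivalently, single-server $\cU$ is $2$-competitive in the unit-job setting). I would prove this via a potential-function argument in the spirit of Theorem~\ref{thm:srptimproved}, using for instance
\[
\Phi(t) = c \int_0^\infty \max\{0,\, n(t,q) - n_o(t,q)\} \, dq,
\]
where $n(t,q)$ (resp.\ $n_o(t,q)$) denotes the number of jobs in $\cU$'s (resp.\ $\opt$'s) system at time~$t$ with remaining size at least~$q$, and $c$ is a suitably chosen positive constant. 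The standard argument is to verify the drift inequality $n(t) + d\Phi(t)/dt \leq n_o(t) + P(s_o(t))$ almost everywhere along with the boundary conditions $\Phi(0) = \Phi(\infty) = 0$ and no positive jumps of $\Phi$ at arrivals, integrating in $t$ to conclude. Combining the three pieces yields $C_\cU = 2 \sum_k F_\cU^{(1)}(\sigma_k) \leq 2 \sum_k C^{(1),*}(\sigma_k) = 2 C_\opt$.

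The main obstacle is the offline decomposition $C_\opt = \sum_k C^{(1),*}(\sigma_k)$: while it is intuitively clear from server symmetry and convexity of $P$ (which penalizes load imbalance) that round-robin is an optimal routing for unit jobs, making the symmetrization and exchange argument rigorous in the presence of migration requires careful handling—in particular, showing that any benefit a migration-based policy could offer in the unit-job setting can be matched by an RR-routing no-migration policy. The single-server potential function step is more routine but still requires a precise drift computation exploiting the local optimality of $\cU$'s speed rule $s = P^{-1}(n)$ at the current state.
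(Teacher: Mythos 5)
Your overall skeleton is the same as the paper's: reduce the theorem to (i) the claim that round-robin dispatch is optimal for $\opt$ when all jobs have unit size, and (ii) the $2$-competitiveness of the single-server policy with speed $P^{-1}(n)$. However, both of your proposed proofs of these sub-claims have genuine gaps. For (i), the symmetrization argument does not go through as described: ``averaging a policy over permutations of the server labels'' is not a well-defined operation on discrete routing decisions (a mixture of policies has the \emph{average} cost of its components, which by server symmetry equals the original cost, and does not yield a deterministic policy with symmetric, let alone round-robin, routing), and convexity of $P$ by itself says nothing about which server a job should be sent to. The paper instead proves Proposition~\ref{prop:rr_centralQ} by a concrete exchange argument: it first establishes that $\opt$ serves each unit job at a constant speed without interruption and dispatches FCFS, and then shows (Claim~1) that $\opt$ completes jobs in arrival order, by explicitly constructing, for any out-of-order pair $i<j$ with $d_j<d_i$, a re-allocation of speeds over $[a_j,d_i]$ that swaps the two completion instants at equal delay-plus-energy cost and then strictly improves by re-convexifying the speeds. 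Completion in arrival order is what licenses cyclic (round-robin) dispatch.

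For (ii), the proposed potential $\Phi(t)=c\int_0^\infty \max\{0,\,n(t,q)-n_o(t,q)\}\,dq$ with a \emph{constant} $c$ cannot deliver the drift inequality $n + P(s) + d\Phi/dt \le \text{const}\cdot(n_o + P(s_o))$ for general $P$ satisfying Assumption~\ref{ass:Pconvex}: when $n \gg n_o$ the algorithm's own processing contributes roughly $-c\,P^{-1}(n)$ to $d\Phi/dt$, and to cancel the holding cost $2n$ one would need $c \gtrsim n/P^{-1}(n)$, which is unbounded in $n$ for superlinear $P$. This is precisely why the single-server literature (and the paper's Theorem~\ref{thm:srptimproved}) uses the nonlinear, $\Delta$-weighted potential $\int f(d(t,q))\,dq$ with $\Delta(x)=P'(P^{-1}(x))$. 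The paper avoids redoing this work entirely by citing the single-server result of \cite{Andrew2010} once Proposition~\ref{prop:rr_centralQ} guarantees that $\cU$ and $\opt$ see identical per-server arrival streams; if you want a self-contained argument you would need to import that potential function, not a linear one.
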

\begin{proof} In Proposition \ref{prop:rr_centralQ}, we show that when
  all jobs are of unit size, $\opt$ follows round robin
  scheduling. Thus, ${\cal U}$ and $\opt$ see the same set of arrivals
  on each server. The result follows from~\cite{Andrew2010}, which
  shows that choosing speed $s_k(t) = P^{-1}(n_k(t))$ for a single
  server system is $2$-competitive.
\end{proof}
\begin{prop} 
  \label{prop:rr_centralQ}
  With unit job sizes, under Assumption~\ref{ass:Pconvex}, $\opt$
  performs round robin dispatch across servers.
\end{prop}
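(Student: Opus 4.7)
The plan is to exploit the unit-size, identical-server structure to decouple the problem by server and then run an exchange argument on the dispatch. First I would argue that because all jobs have unit size and all servers share the same power function, migration confers no benefit: given any offline schedule, one can relabel jobs so that each server processes a fixed subsequence of arrivals without migration, while reproducing the same trajectory of per-server queue sizes and hence the same cost. Consequently, for any dispatch $\pi$ the minimum achievable cost decomposes as $\sum_{k=1}^{m} G(\sigma_k(\pi))$, where $\sigma_k(\pi)$ is the ordered arrival-time sequence assigned to server $k$ and $G(\cdot)$ is the optimal offline single-server cost for unit jobs with the given arrival sequence (the same function on every server by homogeneity). The problem thus reduces to finding a partition of $\{a_1,\dots,a_J\}$ into $m$ subsequences that minimizes $\sum_k G(\sigma_k)$.

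Next I would prove by exchange that any minimizing partition is balanced in the sense that $\bigl||\sigma_a|-|\sigma_b|\bigr|\le 1$ for every pair $a,b$; by server symmetry this is equivalent to round-robin dispatch. Suppose for contradiction that $|\sigma_a|\ge|\sigma_b|+2$ and pick any $a_j\in\sigma_a$. Reassigning $j$ to server $b$ does not increase the cost provided
\begin{equation*}
G(\sigma_a)-G(\sigma_a\setminus\{a_j\}) \ \ge \ G(\sigma_b\cup\{a_j\})-G(\sigma_b),
\end{equation*}
which I would reduce, via a telescoping sum along a chain of single-element insertions, to the statement that the marginal cost $G(\tau\cup\{a_j\})-G(\tau)$ is monotone non-decreasing in $\tau$ with respect to set inclusion.

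The key lemma is therefore this marginal monotonicity of $G$, which I would establish by coupling the optimal single-server schedules on $\tau$ and on $\tau'\supseteq\tau$: the newly inserted job at time $a_j$ competes with strictly more already-queued jobs under $\tau'$, so it accrues more holding cost and forces the server to run at higher speed. Strict convexity of $P$ (Assumption~\ref{ass:Pconvex}) then ensures that the marginal energy increment is also larger under $\tau'$. The main obstacle I expect is precisely this lemma, because it compares two variationally defined single-server optima on nested arrival sequences and must be delicate enough to track the interaction between holding cost and energy cost. I would attempt to handle it either by explicitly constructing, from an optimum for $\tau'\cup\{a_j\}$, a feasible candidate for $\tau'$ (and symmetrically from the optimum for $\tau$ a feasible candidate for $\tau\cup\{a_j\}$) that witnesses the inequality, or by writing the single-server problem as a convex program and reading off monotonicity of its marginal value function from KKT conditions. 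With the lemma in hand, iterating the exchange collapses any optimal dispatch to round-robin, proving the proposition.
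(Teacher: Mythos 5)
There is a genuine gap, and it sits exactly where you declare victory: the claim that a balanced partition ($\bigl||\sigma_a|-|\sigma_b|\bigr|\le 1$ for all $a,b$) ``is equivalent to round-robin dispatch by server symmetry'' is false. Balance constrains only the cardinalities of the per-server job sets, not \emph{which} jobs go where, and the cost $G(\sigma_k)$ depends on the arrival times in $\sigma_k$, not just on $|\sigma_k|$. Concretely, with $m=2$ and jobs $1,2$ arriving at time $0$ and jobs $3,4$ arriving much later, both $\{1,2\},\{3,4\}$ and $\{1,3\},\{2,4\}$ are balanced, but only the second is round robin, and the first is strictly worse (each server holds two simultaneous jobs instead of one). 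So even if your exchange argument were airtight, it would prove only balance, not the proposition. There is a second problem one level down: the inequality $G(\sigma_a)-G(\sigma_a\setminus\{a_j\}) \ge G(\sigma_b\cup\{a_j\})-G(\sigma_b)$ cannot be obtained by telescoping the marginal-monotonicity lemma, because $\sigma_a\setminus\{a_j\}$ and $\sigma_b$ are \emph{disjoint} sets, not nested ones; supermodularity compares marginals along a chain $\tau\subseteq\tau'$, and no such chain connects two servers' workloads. Indeed the inequality is simply false in general: if $\sigma_a$ has many jobs widely separated in time and $\sigma_b$ has two simultaneous ones, moving a job from $a$ to $b$ increases the cost.

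The paper avoids partitions entirely. It allows $\opt$ a central queue with FCFS dispatch, observes (by convexity) that each job is run at a constant speed without interruption, and then proves the key claim that $\opt$ completes jobs in arrival order: if job $j>i$ finished first, one swaps the two jobs' speed profiles from time $a_j$ onward to interchange their departure instants at no change in cost, and then strictly reduces energy by averaging speeds, a contradiction. Completions in arrival order, together with FCFS dispatch, force the assignment to be cyclic across servers, i.e., round robin. If you want to salvage your decomposition route, you would need an exchange that compares \emph{specific interleavings} of arrival times across two servers (essentially re-deriving the completion-order claim), not cardinalities.
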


\begin{proof}
  Let us assume that $\opt$ can hold arriving jobs in a central queue
  before dispatch to one of the $m$ servers. It suffices to show that
  even in this expanded space of policies, $\opt$ can be assumed to
  perform round robin dispatch without loss of optimality (WLO).
  \begin{enumerate}
  \item From the convexity of the power function, it follows that
    $\opt$ serves each job at a constant speed. Labeling jobs in the
    order of their arrival, let $s_j$ denote the speed at which
    job~$j$ is served.
  \item \revision{Once in service, a job is not interrupted until it
      completes. (Indeed, it can be shown via a simple interchange
      argument that it is not beneficial to interrupt a job in service
      to begin service on another, when all jobs are of the same size
      to begin with.)}
    % \item OPT does not idle, i.e., it does not leave servers idle
    %   when the queue is non-empty.
  \item WLO, we may assume that $\opt$ dispatches jobs for service in a
    FCFS manner.
  \end{enumerate}

  {\bf Claim~1:} WLO, $\opt$ completes jobs in the order of their
  arrival.

  It follows from Claim~1 that $\opt$ can be assumed to perform round
  robin WLO. \revision{Indeed, Claim~1 implies that job completions
    from servers would occur in the same order as the order of
    dispatch, allowing for a cyclic (round robin) dispatch across
    servers.}
  
  {\bf Proof of Claim 1:} Let $a_l$ denote the time when job~$l$
  begins service and let $d_l$ denote the time when the same job
  completes service. Suppose the claim does not hold, i.e., there
  exist $i,j$ where $i < j$ such that $d_j < d_i.$ We now demonstrate
  an alternative power allocation that is strictly better for $\opt$.

  Note that $a_i \leq a_j.$ Let $r \leq 1$ denote the remaining work
  of job~$i$ at time $a_j.$ Clearly, $d_j < d_i$ implies that
  $s_j > s_i.$ Fix $\delta \in (0,1/s_j]$ such that
  \begin{equation}
    \label{eq:rr_1}
    s_j \delta + s_i \left(\frac{1}{s_j}-\delta\right) = r.
  \end{equation}
  Consider the following power allocation:
  \begin{enumerate}
  \item Starting at time $a_j,$ job~$i$ is served at speed $s_j$ for
    $\delta$ time units, and at speed $s_i$ for $\frac{1}{s_j}-\delta$
    time units
  \item Starting at time $a_j,$ job~$j$ is served at speed $s_j$ for
    $\frac{1}{s_j}-\delta$ time units, and at speed $s_i$ for
    $\frac{r}{s_i} - \frac{1}{s_j}+\delta$ time units
  \end{enumerate}

  From~\eqref{eq:rr_1}, it is not hard to see that under this new
  power allocation, the departure instants of jobs~$i$ and $j$ are
  interchanged, i.e., job~$i$ completes at time $d_j,$ whereas job~$j$
  completes at time $d_i.$ Moreover, under the above power allocation,
  the cost of $\opt$ remains unchanged. Indeed, the increase in the delay
  cost of job~$j$ is exactly compensated by the decrease in the delay
  cost of job~$i.$ Moreover, the energy cost remains unchanged, and
  the cost associated with all remaining jobs remains unchanged as
  well (we simply interchange all subsequent dispatches between the
  servers serving jobs~$i$ and~$j$).

  Now, from the convexity of the power function, it follows that we
  can strictly decrease the energy cost of $\opt$ by running jobs~$i$ and
  $j$ at constant speeds from time $a_j$, such that the completion
  times remain unchanged.

  This gives us a contradiction, and completes the proof of the claim.
\end{proof}

%Proposition~\ref{prop:rr_centralQ} implies that OPT performs round
%robin WLO when there are $k$ queues and dispatch is performed on
%arrival.

\ignore{
This result also extends to the case where the power function is of
the form $$\hat{P}(s) = \left\{
  \begin{array}{cl}
    0 & \text{ for }s = 0 \\
    p_0 + P(s) & \text{ for } s > 0
  \end{array}
\right.,$$ where $p_0 > 0$ and $P(s)$ is convex and continuous with
$P(0) = 0.$ The main difference is that in the latter model, $\opt$ might
idle, i.e., keep jobs in the central queue even when servers are
idle. But this does not affect the proof of
Proposition~\ref{prop:rr_centralQ}.
}

\section{Worst Case Competitive Ratio: Lower bounds}
\label{sec:CR_lower}

In the previous section, we showed that while SRPT scheduling is not
constant-competitive in a multi-server environment without speed
scaling, it can be made constant-competitive when speed scaling is
allowed. However, one issue with implementing SRPT on multiple servers
is the need for job migration. In this section, we show that a broad
class of greedy non-migratory policies is not constant-competitive.

\revision{For deriving lower bounds, we consider $P(s) = s^\alpha$ in the rest of the section for brevity.} 
We begin by stating the following preliminary result.
%A burst denote a collection of jobs that arrive at the same time.
\begin{lemma}
  \label{lemma:single_server_LB}
  On a single server, consider a single burst of $n$ jobs, with sizes
  $x_n \leq x_{n-1} \leq \cdots \leq x_1.$ The cost incurred by $\opt$ in
  processing this burst equals $c \sum_{k = 1}^n x_k k^{1-1/\alpha},$
  where the constant $c$ depends on $\alpha.$
\end{lemma}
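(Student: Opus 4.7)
The plan is to (a) reduce $\opt$ to a structured form, (b) parameterize the cost as a separable function of one speed per job, and (c) optimize in closed form.

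\textbf{Structure of $\opt$.} Since all $n$ jobs arrive at the same instant on a single server, I first argue that, without loss of optimality, $\opt$ processes jobs in SRPT (equivalently, SPT) order and runs each job at a constant speed, uninterrupted, until it completes. The ``uninterrupted, constant-speed'' part is exactly the single-server interchange-plus-Jensen argument already used in the proof of Proposition~\ref{prop:rr_centralQ} (if $\opt$ uses a time-varying speed profile on a single job, averaging it yields the same completion time at no greater energy, and it is not beneficial to interleave two jobs when all sizes are comparable). The SPT order is a standard swap argument: if a larger job finishes before a smaller one, swapping the two constant speed levels preserves total energy and can only decrease the sum of flow times. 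Because all jobs are present from time $0$, $\opt$ also does not idle.

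\textbf{Writing the cost.} Relabel the jobs in processing order as $y_1 \le y_2 \le \dots \le y_n$, so that $y_i = x_{n-i+1}$, and let $v_i$ denote the (constant) speed chosen for $y_i$. The completion time of $y_i$ is $\sum_{j\le i} y_j/v_j$, so summing over jobs,
\begin{equation*}
F \;=\; \sum_{i=1}^n \sum_{j\le i} \frac{y_j}{v_j} \;=\; \sum_{j=1}^n (n-j+1)\,\frac{y_j}{v_j}.
\end{equation*}
The energy spent on $y_j$ is $v_j^{\alpha}\cdot(y_j/v_j) = y_j v_j^{\alpha-1}$, so the total $\opt$ cost is
\begin{equation*}
C(v) \;=\; \sum_{j=1}^n \Bigl[(n-j+1)\,\frac{y_j}{v_j} + y_j v_j^{\alpha-1}\Bigr].
\end{equation*}

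\textbf{Optimizing speeds.} The cost is separable in the $v_j$. Differentiating with respect to $v_j$ and setting to zero gives $(\alpha-1)v_j^{\alpha} = n-j+1$, hence
\begin{equation*}
v_j^\star \;=\; \bigl((n-j+1)/(\alpha-1)\bigr)^{1/\alpha}.
\end{equation*}
Substituting back, both the flow and energy contributions of job $j$ become proportional to $y_j (n-j+1)^{1-1/\alpha}$, so
\begin{equation*}
C(v^\star) \;=\; c\sum_{j=1}^n y_j (n-j+1)^{1-1/\alpha}
\end{equation*}
for a constant $c$ depending only on $\alpha$. Finally, reindexing via $k = n-j+1$ (so $y_j = x_k$) yields the claimed $c\sum_{k=1}^n x_k k^{1-1/\alpha}$.

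The only part that needs care is the structural reduction (SPT order, constant-speed, no idling, no interleaving); once that is in hand, the rest is a routine one-dimensional optimization per job. I would cite the constant-speed and FCFS-style interchange arguments already invoked in Proposition~\ref{prop:rr_centralQ} rather than reproducing them.
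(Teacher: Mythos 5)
Your proposal is correct and matches the paper's (one-line) proof, which simply asserts that the lemma "follows by direct computation of the optimal speeds for each job"; your separable cost $\sum_j [(n-j+1)y_j/v_j + y_j v_j^{\alpha-1}]$, the optimizer $v_j^\star = ((n-j+1)/(\alpha-1))^{1/\alpha}$, and the resulting constant $c = \alpha(\alpha-1)^{1/\alpha-1}$ are exactly that computation, and the reindexing correctly assigns weight $k^{1-1/\alpha}$ to $x_k$. The structural preliminaries (non-preemptive SPT order, constant speeds, no idling) that you supply are the right ones and are standard for a single burst on one server.
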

The proof of Lemma \ref{lemma:single_server_LB} follows by direct
computation of the optimal speeds for each job that minimize the flow
time plus energy cost \eqref{eq:cost}.

\subsection{Greedy algorithms}

\revision{The main idea in deriving lower bounds on a class of
  non-migratory online algorithms is to construct an arrival
  sequence~$\sigma$ for which that class of algorithms exhibits poor
  performance because of the inability to migrate jobs between servers
  once assigned (given that those assignments were made based on only
  causal information about $\sigma$). In contrast, $\opt$ can have
  much lower cost than the online algorithm, even without migration,
  by exploiting its non-causal information about $\sigma.$}
 
\begin{lemma}
  \label{lemma:workload_greedy}
  Consider the class of policies that routes an incoming job to a
  server with the least amount of unfinished workload. All policies
  in this class have a competitive ratio that is $\Omega(m^{1-1/\alpha}).$
\end{lemma}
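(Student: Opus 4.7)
The plan is to exhibit an adversarial arrival sequence $\sigma$ such that every policy in the class---no matter which scheduling/speed-scaling rule it pairs with least-loaded routing---incurs cost $\Omega(m^{1-1/\alpha})\cdot C_{\opt}(\sigma).$ The intuition is that the workload-based routing rule only sees the \emph{total} remaining work on each server, so it can be lured into dumping many tiny jobs onto a single, temporarily-underloaded server, whereas $\opt$, being offline, can foresee this and split the same work evenly across all $m$ servers.

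Concretely, I will send, essentially at time $0$ (with the small jobs arriving in rapid succession just after the big ones), $m-1$ \emph{large} jobs of size $W$, immediately followed by $n$ \emph{small} jobs of size $s := W/n.$ Because essentially no time elapses between dispatch decisions, the large jobs end up on $m-1$ distinct servers (say $1,\dots,m-1$, each with workload $W$), and then, for $k=1,\dots,n,$ the $k$-th small job sees server $m$ with workload $(k-1)s < W$---strictly the least loaded---so all $n$ small jobs land on server $m.$ Since the policies under consideration are non-migratory, server $m$ is then forced to serve all $n$ size-$s$ jobs by itself, and Lemma~\ref{lemma:single_server_LB} gives a single-server lower bound of $c\sum_{k=1}^n s\,k^{1-1/\alpha} = \Omega(s n^{2-1/\alpha}) = \Omega(W n^{1-1/\alpha})$ on the cost it incurs on those jobs (using $sn=W$).

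To bound $\opt$ from above I will use the explicit, non-migratory strategy that routes each large job to a distinct server in $\{1,\dots,m-1\}$ and spreads the $n$ small jobs evenly ($n/m$ per server) across all $m$ servers, running SRPT at the optimal single-server speed profile on each server. A per-server application of Lemma~\ref{lemma:single_server_LB} then gives $C_{\opt}(\sigma) \le O\bigl(mW + W n^{1-1/\alpha}/m^{1-1/\alpha}\bigr).$ Choosing $n \gg m^{(2\alpha-1)/(\alpha-1)}$ makes the second term dominate, so $C_{\opt}(\sigma) = O(W n^{1-1/\alpha}/m^{1-1/\alpha}),$ and dividing the greedy lower bound by this upper bound yields a competitive ratio of $\Omega(m^{1-1/\alpha}),$ as required.

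The hard part will be making rigorous the claim that the routing is forced for \emph{every} policy in the class, regardless of its speed rule. This is handled by letting the inter-arrival gap shrink to zero, so that the workload snapshot at each dispatch is determined purely by past routing (no processing has had time to occur in between), and by noting that server $m$ is \emph{strictly} least loaded at each small-job arrival, so no tiebreaking question arises. Once the routing is pinned down, the argument reduces to a single-server lower bound (Lemma~\ref{lemma:single_server_LB}) on server $m$ versus the explicit balanced upper bound for $\opt.$
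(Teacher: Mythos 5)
Your proposal is correct and follows essentially the same route as the paper: force least-workload routing to dump all the small jobs onto the one server not holding a large job, lower-bound that server's cost via Lemma~\ref{lemma:single_server_LB}, upper-bound $\opt$ by an explicit balanced non-migratory assignment analyzed server-by-server with the same lemma, and take the number of small jobs polynomially large in $m$ so the small-job term dominates. The paper's instance is just your construction specialized to unit-size small jobs with $n=W=w$, so the two arguments coincide up to parametrization.
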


\begin{proof}
  Consider the following instance: A burst of $m-1$ jobs, each having
  size $w$ arrives at time $0$, and another burst of $w$ jobs, each
  having size 1 arrives at time $0^+.$

  Any workload-based greedy policy would assign the first $m-1$ jobs
  of size $w$ to $m-1$ different servers, and the $w$ jobs of size 1
  to the remaining server. By Lemma~\ref{lemma:single_server_LB}, the
  cost incurred by any such algorithm is at-least \revision{(for some
    $c$ depending only on $\alpha$)}
  $$c(m-1)w + c \sum_{k = 1}^w k^{1-1/\alpha} \geq c(m-1)w + c' w^{2-1/\alpha},$$
  for some constant $c'$.

  Consider now an algorithm $A$ that assigns the first $m-1$ jobs of
  size $w$ to $m-1$ different servers and then distributes the $w$
  jobs of size 1 uniformly among all $m$ servers. The algorithm $A$
  then performs scheduling and speed scaling on each server as per
  single server $\opt$. The cost incurred by $A$ (which upper bounds the
  cost under $\opt$) equals (using Lemma~\ref{lemma:single_server_LB})
  \begin{align*}
    &c \sum_{k = 1}^{w/m}
      k^{1-1/\alpha} + c(m-1)\left[ w + \sum_{k = 2}^{w/m+1} k^{1-1/\alpha} \right] \\
    & \leq c'' m \left(\frac{w}{m}\right)^{2-1/\alpha} + c m w
  \end{align*}
  
  Now, setting $w = m^d$ for large enough $d,$ \revision{$( d > 1/(1-1/\alpha))$} we see that the
  competitive ratio of any workload-based greedy policy is
  $\Omega(m^{1-1/\alpha}).$
\end{proof}

It follows from the proof of Lemma~\ref{lemma:workload_greedy} that
the competitive ratio of any policy that routes an incoming job to
a server that has been assigned the least aggregate workload so far
(including completed as well as queued workload) is also
$\Omega(m^{1-1/\alpha}).$

\begin{lemma}
  \label{lemma:jsq_greedy}
  Consider the class of policies that route an incoming job to a
  server with the least number of queued jobs (join the shortest
  queue (JSQ)). All policies in this class have a competitive ratio that is
  $\Omega(m^{1-1/\alpha}).$
\end{lemma}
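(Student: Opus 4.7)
The plan is to mirror the argument of Lemma~\ref{lemma:workload_greedy}, constructing an adversarial instance that JSQ mishandles. A natural candidate is the same instance used there: a burst of $m-1$ large jobs of size $w$ at time $0$, followed by a burst of $w$ small jobs of size $1$ at time $0^+$. Under JSQ the large jobs land on $m-1$ distinct servers (since all initial queues are empty), leaving one server (say server~$m$) empty; the first small job then goes to server~$m$, after which every queue has length~$1$ and all subsequent small arrivals trigger ties.

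The key step is to argue that, for the tie-breaking rule employed by the given JSQ policy, $\Omega(w)$ of the small jobs can be forced onto server~$m$. For tie-breaking rules that directly favor server~$m$ on ties the stated instance already works; for other rules, I would adapt the construction by interleaving additional large-job releases timed so that the queues of servers $1,\ldots,m-1$ remain strictly larger than that of server~$m$ whenever a new small arrives, forcing JSQ to route that small to server~$m$ regardless. Granting this, Lemma~\ref{lemma:single_server_LB} lower bounds JSQ's cost on server~$m$ alone by $c \sum_{k=1}^{\Omega(w)} k^{1-1/\alpha} = \Omega(w^{2-1/\alpha})$. The offline optimum is upper bounded, as in the proof of Lemma~\ref{lemma:workload_greedy}, by the explicit algorithm~$A$ that places one large job per server on servers $1,\ldots,m-1$ and spreads the $w$ smalls uniformly across all $m$ servers, yielding a cost of $O\bigl(m(w/m)^{2-1/\alpha} + mw\bigr) = O\bigl(w^{2-1/\alpha}/m^{1-1/\alpha} + mw\bigr)$. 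Choosing $w = m^d$ with $d$ large enough so that the first term dominates (concretely, $d > (2-1/\alpha)/(1-1/\alpha)$) yields the desired ratio of $\Omega(m^{1-1/\alpha})$.

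The hard part I expect is the first step: showing that under \emph{any} JSQ tie-breaking rule the adversary can force the concentration of $\Omega(w)$ small jobs on a single server, since JSQ's queue-length based balancing naturally tends to spread work across servers. If the single-burst interleaving scheme proves insufficient for some exotic tie-breaking rules, a multi-round variant alternating small-job and large-job bursts (each round designed to maintain the requisite queue-length imbalance that channels every small to server~$m$) would be the fallback. In either case, the accounting must ensure that the auxiliary large jobs introduced to shape JSQ's routing do not inflate OPT's cost beyond the $O(w^{2-1/\alpha}/m^{1-1/\alpha})$ headroom needed to conclude the bound, which can be arranged by spacing these auxiliary arrivals so that OPT can still amortize them across all $m$ servers in its spreading strategy.
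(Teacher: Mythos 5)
There is a genuine gap, and it sits exactly where you suspected. The step ``force $\Omega(w)$ of the unit jobs onto server~$m$'' is not just hard to prove for exotic tie-breaking rules --- it is essentially impossible for \emph{any} JSQ rule, because JSQ by construction keeps the queue \emph{counts} balanced to within one. In your instance, once the first small job lands on server~$m$ every server holds exactly one job, and from then on each block of $m$ further small arrivals puts exactly one job on each server (a server at count $k+1$ is never chosen while another sits at count $k$). So server~$m$ receives only about $w/m$ small jobs, JSQ ends up doing nearly the same spreading as the benchmark algorithm $A$, and the instance yields a ratio of $O(1)$, not $\Omega(m^{1-1/\alpha})$. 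Your fallback of injecting auxiliary large jobs to keep servers $1,\dots,m-1$ ahead in count is circular: the adversary cannot place jobs on specific servers, it can only release them, and JSQ will route those auxiliary jobs to the least-loaded server --- i.e., toward server~$m$ --- defeating the purpose. Maintaining the imbalance would require $\Omega(mw)$ auxiliary jobs in any case, which destroys the headroom in $\opt$'s cost.

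The paper's construction inverts your idea: instead of concentrating the many small jobs (which JSQ prevents), it concentrates the few \emph{large} jobs (which JSQ cannot prevent, because its routing is size-oblivious). Release $m^2$ jobs in quick succession; since the routing decisions depend only on queue counts and hence only on the arrival order, the adversary can simulate the (deterministic) policy, identify the $m$ arrival positions that get routed to one fixed server --- each block of $m$ consecutive arrivals sends exactly one job to each server --- and assign size $w$ to exactly those positions, size $1$ to the rest. One server then holds $m$ jobs of size $w$, giving cost $\Omega(w\, m^{2-1/\alpha})$ by Lemma~\ref{lemma:single_server_LB}, while an algorithm that gives each server one large and $m-1$ unit jobs pays $O(mw + m^{3-1/\alpha})$; taking $w = m^d$ with $d$ large yields the $\Omega(m^{1-1/\alpha})$ gap. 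Your cost accounting and the choice of $d$ are fine as far as they go, but you need to replace the adversarial instance with one of this ``size-oblivious routing'' type for the argument to close.
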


\begin{proof}
  Consider the following instance: $m^2$ jobs arrive in quick
  succession, causing any JSQ-based policy to perform round robin
  routing. Every $m$th arriving job has size $w,$ while all remaining
  jobs have size 1.

  Thus, under any JSQ-based policy, one server would get $m$ jobs of
  size $w$ routed to it, whereas all other servers would get $m$
  jobs of size~1. Thus, the cost under any such policy is at least $cw  \sum_{k = 1}^m k^{1-1/\alpha} + c (m-1) \sum_{k = 1}^m k^{1-1/\alpha} 
  \geq c' w m^{2-1/\alpha} + c' (m-1)m^{2-1/\alpha}$.
  % \end{align*}

  Consider an algorithm $A$ that routes the jobs uniformly across
  the servers, such that each server gets $m-1$ jobs of size 1, and
  one job of size $w.$ Post routing, $A$ performs scheduling and
  speed scaling on each server as per single server $\opt$. The cost
  incurred by $A$ (which upper bounds the cost of $\opt$) is
  thus
  $m\left[cw + c \sum_{k = 1}^{m-1} k^{1-1/\alpha} \right] \leq c''
  \left[mw + m^{3-1/\alpha} \right]$.

  Now, setting $w = m^d$ for large enough $d,$ we see that the
  competitive ratio of any JSQ-based policy is 
  $\Omega(m^{1-1/\alpha}).$
\end{proof}

It is also clear from the above proof that any policy that performs
round robin routing would have a competitive ratio that is
$\Omega(m^{1-1/\alpha}).$

\subsection{SRPT-based algorithms}

In this section, we consider the following class of non-migratory
SRPT-based policies: Let $y_j(t)$ denote the least remaining
processing time among all jobs queued at server~$j.$ If server~$j$ is
idle at time $t$, then set $y_j(t) = 0.$ Consider now a job of size
$x$ arriving into the system at time $t.$ If the set
$\{j:\ y_j(t) > x\}$ is non-empty, then the job is assigned to a
server from this set. Else, the job is assigned to any server, or held
in a central queue. Each server may preempt between jobs queued at
that server. But jobs once assigned to a certain server must complete
service at that server, i.e., migration is not allowed.

\begin{lemma}
  \label{lemma:srpt_NonMigration}
  Consider the class of non-migratory SRPT-based policies described
  above. All policies in this class have a competitive ratio that is
  $\Omega(m^{1-1/\alpha}).$
\end{lemma}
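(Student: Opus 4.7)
The plan is to exhibit an adversarial arrival sequence on which every policy in the class incurs cost at least $\Omega(m^{1-1/\alpha})$ times that of $\opt$. The key idea is to choose arrivals whose sizes are strictly decreasing in rapid succession, so that the SRPT-based routing rule forces every subsequent job onto the same server to which the first is assigned. Concretely, I would send $m$ jobs at times $0, \epsilon, 2\epsilon, \ldots, (m-1)\epsilon$ (for $\epsilon > 0$ sufficiently small) with sizes $w_k = m - (k-1)$, so $w_1 > w_2 > \cdots > w_m = 1$.

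The first step is to argue that every job must be routed onto a single server, say server~1. Job~1 arrives when all $y_j(0) = 0$, so the forcing set is empty and the policy is free to dispatch it; WLOG to server~1. For each subsequent job~$k$ arriving at time $(k-1)\epsilon$, only server~1 has any pending work, and the least remaining size on server~1 satisfies $y_1 \geq w_{k-1} - s\epsilon > w_k$ for $\epsilon$ small enough, where $s$ is the server speed. The speed must stay $O(1)$: completing a size-$m$ job within $\epsilon$ would require $s = \Theta(m/\epsilon)$ and hence energy $\Theta(m^{\alpha}/\epsilon^{\alpha-1})$, which diverges as $\epsilon \downarrow 0$. Hence the forcing set is $\{1\}$, and job~$k$ must be routed to server~1. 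The second step is the cost comparison. Since effectively $m$ jobs of sizes $1, 2, \ldots, m$ pile up as a near-simultaneous burst on a single server, Lemma~\ref{lemma:single_server_LB} lower-bounds the policy's cost by $\Theta\bigl(\sum_{k=1}^{m} k \cdot k^{1-1/\alpha}\bigr) = \Theta(m^{3-1/\alpha})$. To upper-bound $\opt$, I would invoke the alternative algorithm~$A$ that routes each of the $m$ jobs to a distinct server and runs single-server $\opt$ on each, yielding cost $O(\sum_k w_k) = O(m^2)$. The resulting ratio is $\Omega(m^{3-1/\alpha}/m^2) = \Omega(m^{1-1/\alpha})$, which is the desired bound.

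The main obstacle will be handling the option of deferring arrivals to the central queue, which the policy class permits whenever the forcing set is empty. A \emph{smart} policy could hold all $m$ arrivals briefly and then dispatch smallest-first, sidestepping every forcing step and distributing work uniformly. Ruling this out requires either modifying the construction so that holding itself incurs a delay cost comparable to the pile-up cost on a single server (for example, by spreading the arrivals in time while keeping the inter-arrival gap small enough relative to admissible server speeds so as to preserve the forcing structure), or restricting attention to the natural subclass of policies that dispatch each arrival immediately, to which the above argument applies verbatim.
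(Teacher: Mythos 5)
Your proposal is correct and takes essentially the same route as the paper: a burst of strictly decreasing job sizes arriving in quick succession forces every job onto one server, whose cost is lower-bounded via Lemma~\ref{lemma:single_server_LB} and compared against a policy that spreads the jobs across servers; the paper's instance differs only in using $m-1$ unit-size jobs to occupy $m-1$ servers followed by $m$ jobs of sizes $w, w-\epsilon,\dots,w-(m-1)\epsilon$ with $w=m^d$, which yields the same $\Omega(m^{1-1/\alpha})$ ratio. The central-queue loophole you flag is a genuine looseness in the policy class as defined, but the paper's own proof glosses over it in exactly the same way (it simply asserts the forced routing), so your construction is not weaker on that account.
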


\begin{proof}
  Consider the following instance: $m-1$ jobs of size 1 arrive at time
  0, and $m$ jobs of sizes $w,w-\epsilon,\cdots,w-(m-1)\epsilon$ arrive
  in quick succession right after. 

  Any non-migratory SRPT-based policy would route the $m-1$ jobs of
  unit size to $m-1$ different servers, and the next $m$ jobs to the
  remaining server. Thus, the cost incurred is at
  least $(w-m\epsilon) c \sum_{k = 1}^{m} k^{1-1/\alpha} \geq
  c'(w-m\epsilon) m^{2-1/\alpha}$.

  Consider next a policy $A$ that routes the first $m-1$ unit
  sized jobs to $m-1$ different servers, and distributes the next $m$
  jobs across all servers. Post routing, $A$ performs scheduling and
  speed scaling on each server as per single server $\opt$. The cost
  incurred by $A$ (which upper bounds the cost of $\opt$) is thus at
  most $m c \left(w + 2^{1-1/\alpha} \right)$. Now, setting $w =
  m^d$ for large enough $d,$ we see that the competitive ratio of any
  non-migratory SRPT-based policy is $\Omega(m^{1-1/\alpha}).$
\end{proof}

\ignore{
  \subsubsection{SRPT with migration} We now prove a lower bound on the
competitive ratio of any policy performing SRPT with migration. 

\begin{lemma}
  \label{lemma:srpt_Migration_LB}
  Consider the class of policies that perform SRPT scheduling. For
  $m = 2,$ all policies in this class have a competitive ratio of
  at-least $$fill in.$$
\end{lemma}

\begin{proof}
Recall the worst case input used in \cite{} to show that the competitive ratio of SRPT without speed scaling but with migration has a competitive ratio that grows as $\max\{\log \sfr, \log(n/m)\}$ where $\sfr$ is the ratio of the maximum and minimum job size, and $n$ is the number of jobs.
The basic 'bad' input sequence considered to show the lower bound in \cite{} is as follows. 

Consider two servers $m=2$. At time 
$t_i = 2\sfr\left(1-\frac{1}{2^{i}}\right)$, for $i=0, \dots, \log \sfr-1$, $3$ jobs arrive where $2$ 'short' jobs have size $\frac{\sfr}{2^{i+1}}$ and one 'large' job has size $\frac{\sfr}{2^{i}}$. Following this at time $t_k = \log \sfr-1+k, k=1, \dots, \sfr^2$ one job with size $1$ arrives.
SRPT assigns the two small jobs to distinct servers that arrive at time $t_i$, assigns the larger unfinished jobs once the two small jobs are finished. Thus, between time $t_i$ and $t_{i+1}$ SRPT finishes only two small jobs that have arrived at time $t_i$ and at time $t_{ \log \sfr-1}$, $\log \sfr$ jobs are unfinished by SRPT. Thus, all the $\sfr$ unit sized jobs that arrive at $t_k = \log \sfr-1+k, k=1, \dots, \sfr^2$ preempt the $\log \sfr$ jobs pending at time $t_{ \log \sfr-1}$, and the completion time of SRPT by just counting the delay of $\log \sfr$ is at least $\sfr^2 \log \sfr$. 

On the other hand $\opt$ finishes all the three jobs that arrive at time 
$t_i = 2\sfr\left(1-\frac{1}{2^{i}}\right)$, for $i=0, \dots, \log \sfr-1$ by time $t_{i+1}$, by assigning the two small jobs to the same server and the large job to the another server. Thus, the completion time for $\opt$ is only $\sfr^2$, making the competitive ratio $\log \sfr$.

Now we turn to SRPT with speed scaling with the same input as above until time $t_{ \log \sfr-1}$, and call the 'large' job that has size $\frac{\sfr}{2^{i}}$ and arrives at time $t_i$ as job $L_i$. We also allow the SRPT algorithm to know the input sequence until time $t_{ \log \sfr-1}$ in advance. Let $\alpha_i$ be the fraction of the job $L_i$ that is served by time $t_{i+1}$. We want to argue that if $\alpha_i \le 1/2$ for any $i$ then the competitive ratio of the algorithm will be $\log \sfr$ as above.

FACT 1: For any job $j$, the speed dedicated to it is a non-increasing function of time.

Focus on only job $L_1$. In time interval $[t_i, t_{i+1}]$, let the speed and the time for which $L_1$ is run  be $\gamma_i$ and $\ell_i$, respectively. We know that $\gamma_i=\gamma_j$, since input is known in advance. Moreover, want to claim that $\ell_i \ge 2 \ell_{i+1}$ for all $i$. 

Supposing that more than $1/4$ of $L_1$ is remaining by time $t_2$, the remaining part of $L_1$ at $t_2$ will be run after the short job that arrives at time $t_2$. Let the time for which the short jobs that arrive at time $t_1$ and $t_2$ are run ahead of $L_1$ on the same server be $p_1$ and $p_2$. 

To claim that $\ell_1 \ge 2 \ell_{2}$, let $\ell_1 < 2 \ell_{2}$. This implies that $p_1>2p_2$, since $p_1+\ell_1=1$ while $p_2+\ell_2=1/2$. Now we can increase $p_2$ by a small amount $\delta$ to $p_2+\delta$ and decrease $p_1 = p_1-\delta$. We see that the cost (flow time+energy) for $L_1$ remains same doing this, since $\gamma_1=\gamma_2$, while the summation of the cost for the two short jobs decreases, contradicting the claim that $\ell_1 < 2 \ell_{2}$. We can proceed similarly for other $i$'s.

Let $\alpha_1=1/2$. Then clearly $\alpha_2=1/2$ since the system state at time $t_2$ is same as time $t_1$, with respect to the next arrivals at time $t_3$ and the outstanding jobs. Essentially the fraction of time left (until next arrival) and job sizes are in same proportion at each $t_i$. Moreover, the fraction of job $L_i$ served between $t_{i+1}$ and $t_{i+2}$ is at most $1/2$ because of FACT 1. Extending this argument, one sees that if $\alpha_1 = 1/2$, then at end of time $t_{ \log \sfr-1}$, $\log \sfr$ jobs are unfinished by SRPT with $\sfr/4$ amount of work left, similar to the no speed scaling case. 
At this time let $2P^2$ unit sized jobs arrive together instead of them arriving at unit time difference as in the no speed scaling case if $\alpha_i=1/2$. If $\alpha_i > 1/2$ no job arrives after time time $t_{ \log \sfr-1}$. 

So these unit sized jobs will preempt the $\log \sfr$ 'long' jobs. 

The optimal speed scaling cost is then 
$$ \sum_{i=1}^{\sfr^2} \left(\frac{i+\log \sfr}{s_i}+ \frac{1}{s_i}s_i^\alpha\right) + \sum_{j=1}^{\log \sfr} c_j,$$
where in $c_j$ we have included only the job execution time cost while the waiting cost for them has been included in the first summation.
The, optimal value of $s_i^* \propto (\sfr^2+\log \sfr)^{1-1/\alpha} \sim (\sfr^2)^{1-1/\alpha}$. Which results in the waiting cost for all the $\log \sfr$ jobs to be at least $(P^2)^{(3/2)}$, and the overall cost is at least $(P^2)^{(3/2)} \log \sfr$

%The speed scaling now has to decide how to handle these jobs, either serve the $2P^2$ jobs fast or pay for the delay of $\log \sfr$ 'long' jobs. The optimal cost for just the $2P^2$ jobs with speed scaling is $(P^2)^{(3/2)}$. Want to have an multiplying $\log P$ to this???

A non-SRPT algorithm, can however, schedule two small and one big job in each interval $[t_i, t_{i+1}]$ and have no outstanding job at time $t_{ \log \sfr-1}$, and the total cost is $(P^2)^{(3/2)}$.

So $\alpha_i > 1/2$, in which case we do not add any job after time $t_{ \log \sfr-1}$, which will give a $\alpha$ power increase in cost for time interval $[t_i, t_{i+1}]$.

 By tweaking the input, one can similarly argue that $\alpha_i > 2/3$ as well. Which will give a concrete bound of $(1+2/3)^\alpha$.
\end{proof}
%%%% 
JK: Not clear to me how to come up with a lower bound that both
exceeds one and grows with $\alpha.$ In the argument we have in
note0.pdf, suppose the SRPT-based algorithm does complete the initial
3 jobs by time $r.$ The cost of the algorithm (taking speed 1 on the
server with a single job and speed 3/2 on the other server) is:
$$\left[\frac{r}{3} + \frac{r}{2} + r \right] + \left[\frac{r}{2} 
  + r \left(\frac{3}{2}\right)^{3/2} \right]$$ Here the first bracket
is the delay cost, the second is the energy cost. 

Considering an algorithm $A$ that packs the two short jobs
together. The cost until time $t$ is $$\left[ \frac{r}{2} + 2r\right]
+ \left[2r\right] = \frac{7r}{2}.$$

The ratio of the above costs is less than 1 for small $\alpha,$ though
it exceeds 1 for $\alpha = 2.$
}%end ignore

\section{Stochastic Input}\label{sec:stoc}

In this section, we consider a stochastic model for the job
arrivals. Jobs arrive according to a Poisson process of rate
$\lambda,$ and have i.i.d. sizes. Let $X$ denote a generic job
size. We assume that $\Exp{X} < \infty.$ The \emph{load}, which is the
rate at which work is submitted to the system, is given by $\Lambda =
\lambda \Exp{X}.$

The performance metric under consideration is the stationary variant
of the flow time plus energy metric considered for the worst-case
analysis, i.e.,
\begin{equation}
  \label{eq:stochastic_metric}
  C = \Exp{T} + \Exp{E},
\end{equation}
where $T$ denotes the steady state response time, and $E$ denotes the
energy required to serve a job in steady state.\footnote{Of course,
  for this metric to be meaningful, we restrict attention to policies
  that are regenerative, and thus have a meaningful steady state
  behavior. We also note that it is straightforward to extend the
  results of this section to a metric that is a linear combination of
  $\Exp{T}$ and $\Exp{E}.$} In the present section, we restrict
attention to power functions of the form $P(s) = s^{\alpha},$ where
$\alpha > 1.$

%Let the arrival rate be Poisson with rate $\lambda$ with expected
%size of jobs be $\bbE[X]$. Then the expected `load' is $\lambda
%\bbE\{X\} = \Lambda$.

In the following, we generalise a result proved in
\cite{wierman2009power} for the single server setting to the
multi-server setting. Specifically, we show that a policy that routes
each job randomly, and runs each server at a constant speed
$s^*(\Lambda)$ when active, is constant competitive. Note that the
speed chosen depends on the load $\Lambda,$ which needs to be known or
learnt. Policies of this type are referred to in
\cite{wierman2009power} as \emph{gated static} policies.

Specifically, the proposed algorithm $\mathcal{S}$ is the following:
Arriving jobs are routed to any server uniformly at random. Each server
performs processor sharing (PS) scheduling using a fixed speed
$s^*(\Lambda),$ which is the optimal static speed to minimize the
metric \eqref{eq:stochastic_metric} on that (single) server.

We begin our analysis by deriving a lower bound on the performance of
any routing and speed scaling policy.

\subsection{Lower Bound}

%We now derive a lower bound on the performance of any routing and
%speed scaling policy.

Let $s_i$ denote the time-averaged speed of server $i.$ We have
\begin{align}
  \lambda C & \geq \lambda \Exp{E} = \sum_{i=1}^m \Exp{P(s_i)}
  \nn \\ & \stackrel{(a)}\geq \sum_{i=1}^m P(\Exp{s_i})  \stackrel{(b)}\geq \sum_{i=1}^m P(\Lambda/m)
  \nn \\ &=\frac{\Lambda^{\alpha}}{ m^{\alpha-1}}, \label{eq:stoclb1}
  \end{align}
where $(a)$ follows from  an application of the Jensen's inequality,
while $(b)$ follows from the convexity of the power function $P$, given
that $\sum_{i=1}^m \Exp{s_i} = \Lambda$ (for stability).

Next, we derive an alternate lower bound on $C$. Consider the case
when only a single job of size $X$ arrives. This job is run at a
constant speed $s^\star$ that minimizes its response time plus energy
consumption, i.e., $s^\star = \arg \inf_{s>0} \frac{X}{s} +
\frac{X}{s} P(s) = \left(\frac{1}{\alpha-1}\right)^{1/\alpha}.$
This yields the following lower bound on the performance of any
algorithm 
\begin{equation}
  \label{eq:stoclb2}
  \lambda C \ge \Lambda \alpha
  (\alpha-1)^{\left(\frac{1}{\alpha}-1\right)}.
\end{equation}

Combining \eqref{eq:stoclb1} and \eqref{eq:stoclb2} gives us 
\begin{equation}
  \label{eq:stoclb3}
  \lambda C \ge \max\left( \Lambda \alpha
  (\alpha-1)^{\left(\frac{1}{\alpha}-1\right)},
  \frac{\Lambda^{\alpha}}{ m^{\alpha-1}} \right).
\end{equation}

Next, we characterize the performance of the proposed policy and bound
its competitive ratio.

\subsection{Performance under policy $\mathcal{S}$}
\label{subsec:stocperf}

Under random routing, each server sees a Poisson arrival process with
rate $\lambda/m.$ Thus, the performance under
metric~\eqref{eq:stochastic_metric} when operating each server at
speed $s$ when active with PS scheduling is given by
\begin{equation}
  \label{eq:stocub}
C(s) = \frac{\Exp{X}}{(s-\Lambda/m)} + \frac{\Exp{X}}{s} P(s).
\end{equation} 
Choosing $s^*(\Lambda) = \argmin_{s > 0} C(s),$ we get that the cost \eqref{eq:stochastic_metric} of the
algorithm $\mathcal{S}$ is given by $C(s^*(\Lambda)).$

\ignore{ Hence, the expected load seen by any server is
  $\Lambda/m$. Thus, the cost of any one server when run at speed $s$
  whenever there is at least one unfinished job is
\begin{equation}\label{eq:stocub} c(s) = \frac{\Lambda/m}{(s-\Lambda/m)} + \frac{\Lambda}{ms} P(s).
\end{equation} If $P(s) = s^\alpha$, then $s^\star = \arg \min_{s>0} c(s)$ satisfies $ (\alpha-1)(s^\star)^{\alpha-2} (s^\star-\Lambda/m)^2=1$. The overall cost is $C \le mc(s^\star)$.
}

\begin{theorem}
  In the stochastic input setting, the competitive ratio of the
  algorithm $\mathcal{S}$ \revision{is upper bounded by} a constant
  that depends on $\alpha$ but not on $\lambda,$ the job size
  distribution, or $m$.
\end{theorem}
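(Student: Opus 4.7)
The plan is to upper bound $\lambda C_{\mathcal{S}}$ by a constant (depending only on $\alpha$) times the lower bound \eqref{eq:stoclb3} on $\lambda C_{\opt}$. Since $s^*(\Lambda)$ is by definition the minimizer of $C(\cdot)$ in \eqref{eq:stocub}, it suffices to exhibit any convenient suboptimal $s_0 > \Lambda/m$ and bound $C(s_0)$ from above.

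First I would set $s_0 = s^\star + \Lambda/m$, where $s^\star = (\alpha-1)^{-1/\alpha}$ is the single-job-optimal speed identified on the way to \eqref{eq:stoclb2}. This choice is specifically designed to interpolate between the two regimes of the lower bound \eqref{eq:stoclb3}: $s^\star$ is the relevant speed scale in the low-load regime (captured by the $\Lambda\alpha(\alpha-1)^{1/\alpha-1}$ term), while $\Lambda/m$ is the scale in the high-load regime (the $\Lambda^\alpha/m^{\alpha-1}$ term). It also has the convenient algebraic property $s_0 - \Lambda/m = s^\star$, which eliminates the load-dependent denominator in \eqref{eq:stocub} and yields
$$\lambda C(s_0) \;=\; \frac{\Lambda}{s^\star} \;+\; \Lambda\Bigl(s^\star + \tfrac{\Lambda}{m}\Bigr)^{\alpha-1}.$$

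Next I would split the energy term using the elementary bound $(a+b)^{\alpha-1} \leq 2^{\alpha-1}(a^{\alpha-1} + b^{\alpha-1})$ (subadditivity suffices when $\alpha \in (1,2)$) to separate the single-job scale $s^\star$ from the load scale $\Lambda/m$. Substituting the closed form for $s^\star$ shows that the two contributions involving $s^\star$ are $\alpha$-dependent constants times $\Lambda\alpha(\alpha-1)^{1/\alpha-1}$, while the contribution involving $\Lambda/m$ is a constant times $\Lambda^\alpha/m^{\alpha-1}$. Dividing by the $\max$ in \eqref{eq:stoclb3} using $x + y \leq 2\max(x,y)$ then yields an upper bound on the competitive ratio that depends only on $\alpha$. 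As a sanity check, for $\alpha = 2$ one has $s^\star = 1$ and the expression specializes to $\lambda C(s_0) = 2\Lambda + \Lambda^2/m$, which against $\max(2\Lambda, \Lambda^2/m)$ recovers the promised ratio of exactly $2$.

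The only step that requires real insight is the choice of $s_0$; everything else is direct computation. A naive guess like $s_0 = 2\Lambda/m$ fails in the low-load regime (where it can even be smaller than $s^\star$), while any fixed multiple of $s^\star$ fails in the high-load regime (where it eventually violates the stability constraint $s_0 > \Lambda/m$). The additive interpolation $s^\star + \Lambda/m$ is essentially the simplest choice that absorbs both regimes simultaneously and makes the pieces of the upper bound align term by term with the two pieces of the lower bound.
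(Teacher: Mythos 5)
Your argument is correct and is essentially the paper's proof: both bound $C(s^*(\Lambda))$ by evaluating $C$ at a suboptimal speed of the form $(\text{constant}) + \Lambda/m$ so that the denominator $s-\Lambda/m$ becomes a constant, then split the $(\cdot)^{\alpha-1}$ energy term and compare against the two-part lower bound \eqref{eq:stoclb3}. The only (cosmetic) difference is that the paper takes the test speed $1+\Lambda/m$ while you take $(\alpha-1)^{-1/\alpha}+\Lambda/m$; the two coincide at $\alpha=2$, and both yield an $\alpha$-dependent constant independent of $\lambda$, the job size distribution, and $m$.
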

\begin{rem}\label{rem:stoc} The basic idea of the proof is to compare the two lower bounds on the cost \eqref{eq:stochastic_metric} derived in \eqref{eq:stoclb1} and \eqref{eq:stoclb2},  with the upper bound on the cost \eqref{eq:stochastic_metric} of $\cS$,  the random routing and gated static policy, derived in \eqref{eq:stocub}.
 \end{rem}

\begin{proof}
  The proof follows by comparing the performance $\mathcal{S}$ with
  the lower bound \eqref{eq:stoclb3} that holds for any algorithm.

  Indeed, for any algorithm $A,$
  \begin{align*}
    \frac{C_{\mathcal{S}}}{C_A} &\leq \frac{\lambda
      c(s^*(\Lambda))}{\max\left( \Lambda \alpha
      (\alpha-1)^{\left(\frac{1}{\alpha}-1\right)},
      \frac{\Lambda^{\alpha}}{ m^{\alpha-1}} \right)} \\ &\leq
    \frac{\lambda c(1+\Lambda/m)}{\max\left( \Lambda \alpha
      (\alpha-1)^{\left(\frac{1}{\alpha}-1\right)},
      \frac{\Lambda^{\alpha}}{ m^{\alpha-1}} \right)}
    \\ &=\frac{\Lambda + \Lambda
      \left(1+\Lambda/m\right)^{\alpha-1}}{\max\left( \Lambda \alpha
      (\alpha-1)^{\left(\frac{1}{\alpha}-1\right)},
      \frac{\Lambda^{\alpha}}{ m^{\alpha-1}} \right)} \\ & \leq
    \frac{\Lambda + \Lambda
      \left(1+\Lambda/m\right)^{\alpha-1}}{\min(1,\alpha
      (\alpha-1)^{\left(\frac{1}{\alpha}-1\right)}) \max\left(
      \Lambda, \frac{\Lambda^{\alpha}}{ m^{\alpha-1}} \right)} \\ &
    \leq \frac{1+2^{\alpha-1}}{\min(1,\alpha
      (\alpha-1)^{\left(\frac{1}{\alpha}-1\right)})}.
  \end{align*}
\end{proof}

The above bound can be tightened for the case $\alpha = 2,$ since
$s^*(\Lambda)$ can be computed explicitly in this case.

\begin{corollary}
  For $\alpha=2$, in the stochastic input setting, the competitive
  ratio of the algorithm $\mathcal{S}$ is at most $2$.
\end{corollary}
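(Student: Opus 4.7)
The plan is to exploit the fact that for $\alpha=2$ the cost expression \eqref{eq:stocub} admits a closed-form minimizer, which yields an exact expression for $\lambda C_{\mathcal{S}}$ that I can then directly compare, term by term, with the lower bound \eqref{eq:stoclb3}.

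First I would specialize \eqref{eq:stocub} to $P(s)=s^{2}$, obtaining
\begin{equation*}
C(s)=\frac{\Exp{X}}{s-\Lambda/m}+\Exp{X}\,s.
\end{equation*}
Since this is strictly convex on $(\Lambda/m,\infty)$, its unique minimizer satisfies the first-order condition $(s-\Lambda/m)^{2}=1$, so $s^{*}(\Lambda)=1+\Lambda/m$. Substituting back gives $C(s^{*}(\Lambda))=\Exp{X}\bigl(2+\Lambda/m\bigr)$, and therefore
\begin{equation*}
\lambda C_{\mathcal{S}}=2\Lambda+\frac{\Lambda^{2}}{m}.
\end{equation*}

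Next I would specialize the lower bound \eqref{eq:stoclb3}. For $\alpha=2$, the constant $\alpha(\alpha-1)^{1/\alpha-1}$ simplifies to $2$, so the lower bound reads $\lambda C_A\ge \max\!\left(2\Lambda,\,\Lambda^{2}/m\right)$ for every algorithm $A$. The concluding step is then the elementary observation that each summand in $2\Lambda+\Lambda^{2}/m$ is individually dominated by $\max(2\Lambda,\Lambda^{2}/m)$, whence
\begin{equation*}
\frac{C_{\mathcal{S}}}{C_A}\le\frac{2\Lambda+\Lambda^{2}/m}{\max(2\Lambda,\,\Lambda^{2}/m)}\le 2.
\end{equation*}

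There is no real obstacle; the argument is short precisely because the $\alpha=2$ case is algebraically tractable. The only sanity checks to make along the way are that the first-order condition identifies the global minimum (ensured by strict convexity of $C(\cdot)$ on $(\Lambda/m,\infty)$) and that $s^{*}(\Lambda)=1+\Lambda/m$ lies strictly above $\Lambda/m$ so that the system is stable, both of which are immediate.
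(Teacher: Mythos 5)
Your proposal is correct and follows exactly the paper's argument: compute $s^*(\Lambda)=1+\Lambda/m$ from \eqref{eq:stocub}, obtain $\lambda C_{\mathcal{S}}=2\Lambda+\Lambda^2/m$, and compare against the lower bound $\lambda C_A\ge\max(2\Lambda,\Lambda^2/m)$ from \eqref{eq:stoclb3}. You merely spell out the first-order condition and the elementary inequality $(a+b)/\max(a,b)\le 2$ that the paper leaves implicit.
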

\begin{proof}
  For $\alpha =2$, from \eqref{eq:stocub}, we get
  $s^{*}(\Lambda) = 1 + \frac{\Lambda}{m},$ and thus, the performance
  $C_{\mathcal{S}}$ under the algorithm satisfies
  $$\lambda C_{\mathcal{S}} = \Lambda^2/m + 2\Lambda.$$ Now, from
  \eqref{eq:stoclb3}, $\lambda C_A \ge \max\{\Lambda^2/m, 2\Lambda\}$
  under any algorithm $A$, which implies the statement of the
  corollary.
\end{proof}

\section{Concluding Remarks}

In this paper, we show that SRPT can be made constant competitive in
the multi-server speed scaling environment with respect to the flow
time plus energy metric. This presents an interesting contrast to the
case when server speeds are constant, where it is known that SRPT has
an unbounded competitive ratio with respect to the flow time
metric. We also show that the multi-server speed scaling problem is
easy in the absence of job size variability; simple round robin
dispatch in conjunction with a single-server speed scaling rule is
near-optimal. Finally, we show that a broad class of policies based on
greedy non-migratory dispatch rules do not admit a constant
competitive ratio.

In contrast, in the stochastic setting, we show that random routing,
along with a gated static speed setting is constant
competitive. However, the required speed is a function of the load,
which needs to be learnt.

While SRPT is a well studied scheduling policy in the multiple server
setting, one issue with implementing SRPT in practice is the need for
migration. Considering that there is a cost associated with migration
of a job across servers in practice, a natural generalization would be
to include this cost of migration in the performance metric. How to
optimally tradeoff flow time, energy consumption, and migration costs
is an interesting open problem for the future. However, it is easy to
bound the performance of the SRPT-based speed scaling algorithm
proposed in this paper accounting for migration costs. Indeed, in a
job sequence consisting of $J$ jobs, SRPT performs at most $J$
migrations. Thus, assuming a fixed cost of each migration, our
SRPT-based algorithm remains constant competitive with respect to the
flow time plus energy plus migration cost metric if one assumes a
lower bound on the size of each job; in this case, the migration cost
is at most a constant factor of the flow time.

Finally, we note that while there is a considerable literature on
speed scaling in parallel multi-server environments, we are not aware
of any work on speed scaling in tandem queueing systems, and more
generally, on a queueing network. Coming up with constant competitive
speed scaling algorithms in these settings is an interesting avenue
for future work.

\appendices

%\clearpage

%\section{Proof of Theorem~\ref{thm:srptimproved}}

\section{Proof of Proposition \ref{prop:boundary}}
\label{app:boundary}
\begin{proposition}\label{prop:boundary}
  $\Phi(\cdot)$ as defined in \eqref{defn:phi} satisfies boundary
  conditions \revision{1) and 2)}.
\end{proposition}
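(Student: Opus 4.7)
The plan is to verify both boundary conditions directly from the definitions of $\Phi_1$ and $\Phi_2$ in \eqref{defn:phi}, exploiting the fact that the integrands depend only on the pointwise comparison of the remaining-work profiles of the algorithm and $OS$.

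For condition~1, before the first arrival and after all jobs have completed under both the algorithm and $OS$, we have $n(t,q) = n_o(t,q) = 0$ for every $q \geq 0$. Hence $d(t,q) = 0$ and $n(t,q) - n_o(t,q) = 0$ for all $q$, which, using $f(0) = 0$, gives $\Phi_1(t) = 0$ and $\Phi_2(t) = 0$, so $\Phi(t) = 0$.

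For condition~2, the only points of non-differentiability of $\Phi$ are arrival and completion instants (under either schedule), and I will argue that at each such instant $\Phi$ is in fact continuous, which is strictly stronger than the required statement. At an arrival of a job of size $w$, both the algorithm and $OS$ add the same job with the same initial remaining size $w$ to their respective active sets; therefore, for every $q \in [0,w]$ both $n(t,q)$ and $n_o(t,q)$ increase by $1$ simultaneously, and for $q > w$ neither changes. The pointwise difference $n(t,q) - n_o(t,q)$ is unaltered at every $q$, so $d(t,q)$ is unaltered as well, and neither $\Phi_1$ nor $\Phi_2$ jumps. At a completion instant, the remaining work of the completing job has decreased continuously to $0$ (since each server operates at a finite speed between non-differentiability points), so the counting function $n(t,q)$ (respectively, $n_o(t,q)$) only experiences a downward jump at $q = 0$. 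Since $\{0\}$ has Lebesgue measure zero in the integrals defining $\Phi_1$ and $\Phi_2$, this jump contributes nothing, and $\Phi$ is continuous across the instant.

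The only subtlety is the completion case, where one must argue that the jump in the counting function is concentrated at $q=0$; this is immediate from the continuity of remaining-work under bounded speeds. I do not anticipate any substantive obstacle.
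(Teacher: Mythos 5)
Your proof is correct and follows essentially the same route as the paper's: condition~1 via $n(t,q)=n_o(t,q)=0$ outside the busy horizon, and condition~2 by noting that arrivals leave $n(t,q)-n_o(t,q)$ unchanged pointwise in $q$ while completions alter the counting functions only at $q=0$, a Lebesgue-null set for the integrals defining $\Phi_1$ and $\Phi_2$. Your version is in fact slightly more explicit than the paper's (spelling out why the arrival increments cancel and why the completion jump is concentrated at $q=0$), but there is no substantive difference.
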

\begin{proof} Note that Condition~1 is satisfied; before any job is
  released and after all jobs are finished, $\Phi(t)= 0,$
  \revision{since $n(t,q)= n_o(t,q)=0$ for all $q.$}
  \revision{Whenever a new job arrives/is released, $n(t,q)- n_o(t,q)$
    and $d(t,q)$ do not change for any~$q$,} so $\Phi$ remains
  unchanged.  Similarly, whenever a job is completed by the algorithm
  or $\opt$, $d(t,q)$ or $n(t,q) -n_o(t,q)$ is changed for only a
  single point of $q=0,$ which does not introduce a discontinuity in
  $\Phi(t).$ Thus, Condition~2 is also satisfied.
\end{proof}

\section{Proof of Lemmas~\ref{lemma:phi1_OPT-SRPT} and
\ref{lemma:phi2_OPT-SRPT}}
\label{app:lemmas_main_thm}

To prove Lemmas~\ref{lemma:phi1_OPT-SRPT} and
\ref{lemma:phi2_OPT-SRPT} we need the following technical lemma from
\cite{bansal2009speedconf}.

\begin{lemma}\label{lem:bansal}[Lemma 3.1 in~\cite{bansal2009speedconf}]
  For $s_k, {\tilde s}_k, x \ge 0$,
\begin{align*}
\Delta(x)(-s_k + {\tilde s}_k) \le& \left(-s_k +P^{-1}(x)\right)\Delta(x) 
%\label{eq:powerbound}
 + P({\tilde s}_k) -x.
\end{align*}
\end{lemma}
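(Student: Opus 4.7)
The plan is to observe that the term $-s_k \Delta(x)$ appears on both sides of the inequality and cancels out, reducing the claim to the statement
\[
\Delta(x)\,\tilde{s}_k \;\leq\; \Delta(x)\,P^{-1}(x) + P(\tilde{s}_k) - x,
\]
or equivalently
\[
\Delta(x)\bigl(\tilde{s}_k - P^{-1}(x)\bigr) \;\leq\; P(\tilde{s}_k) - x.
\]
Since $x = P(P^{-1}(x))$ and by definition $\Delta(x) = P'(P^{-1}(x))$, this is exactly
\[
P'(P^{-1}(x))\bigl(\tilde{s}_k - P^{-1}(x)\bigr) \;\leq\; P(\tilde{s}_k) - P(P^{-1}(x)).
\]

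The plan is then to invoke the first-order characterization of convex differentiable functions, namely that for any convex differentiable $P$ and any two nonnegative points $y, z$,
\[
P(z) \;\geq\; P(y) + P'(y)(z - y).
\]
I would apply this with $y = P^{-1}(x)$ (well-defined and nonnegative since $P$ is strictly increasing with $P(0)=0$, by Assumption~\ref{ass:Pconvex}) and $z = \tilde{s}_k$. Rearranging gives precisely the required inequality. Note that the $s_k$ variable plays no essential role — it contributes symmetrically on both sides, which is why the argument is agnostic to its value (and in particular to its sign relative to $P^{-1}(x)$).

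I do not anticipate any genuine obstacle here: the entire content of the lemma is a single application of convexity, once one spots the cancellation of $-s_k\Delta(x)$ from both sides. The only things worth verifying carefully are (i) that $P^{-1}$ is well-defined on the relevant range of $x$, which follows from $P$ being a strictly increasing continuous bijection of $\mathbb{R}_+$ onto itself under Assumption~\ref{ass:Pconvex}, and (ii) that differentiability of $P$ justifies using the supporting-hyperplane form of convexity (also guaranteed by Assumption~\ref{ass:Pconvex}). Everything else is algebraic rearrangement.
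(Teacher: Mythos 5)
Your proof is correct. Note that the paper itself does not prove this lemma at all---it is imported verbatim as Lemma~3.1 of \cite{bansal2009speedconf}---so there is no in-paper argument to compare against; your derivation (cancel $-s_k\Delta(x)$ from both sides, rewrite $x$ as $P(P^{-1}(x))$ and $\Delta(x)$ as $P'(P^{-1}(x))$, and apply the first-order convexity inequality $P(z) \ge P(y) + P'(y)(z-y)$ at $y = P^{-1}(x)$, $z = \tilde{s}_k$) is exactly the standard one-line argument used in the cited reference, and all the hypotheses you invoke (differentiability, strict monotonicity, $P(0)=0$, hence $P^{-1}$ well-defined on $\R_+$) are supplied by Assumption~\ref{ass:Pconvex}. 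Your observation that $s_k$ plays no role in the validity of the inequality is also correct---its purpose only emerges in the applications, where $s_k \ge P^{-1}(x)$ makes the term $(-s_k + P^{-1}(x))\Delta(x)$ nonpositive and hence discardable.
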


\revision{We now give the proofs of Lemmas~\ref{lemma:phi1_OPT-SRPT}
  and \ref{lemma:phi2_OPT-SRPT}.}
\begin{proof}[Proof of Lemma~\ref{lemma:phi1_OPT-SRPT}]
  % Throughout we assume that $\opt$ is following SRPT.
  Let $q(i)$ and $q_o(i)$ denote, respectively, the size of
  the~$i^{th}$ shortest job \emph{in service} under the algorithm and
  $OS.$

  \noindent {\bf Case 1: $n \geq m.$} Suppose that $OS$ is serving
  $r$ jobs, where $r \leq m.$ Define $\tilde{n}(q) = \max(n(q),n-r)$,
  and $\tilde{n}_o(q) = \max(n_o(q),n_o - r)$. \revision{We assume for
    now that all jobs under the algorithm and $OS$ have distinct
    remaining sizes. We later point out how our argument generalizes
    when this assumption is relaxed. Now, given the preceding
    assumption,} the function $g(q) := \tilde{n}(q) - \tilde{n}_o(q)$
  satisfies the following properties.
  \begin{enumerate}
  \item $g(0) = n-n_o,$ $g(q) \ra n-n_0$ as $q \ra \infty,$
  \item $g$ is piecewise constant and left-continuous, with a downward
    jump of 1 at $q = q(i),$ $1 \leq i \leq r,$ and an upward jump of
    1 at $q = q_o(i),$ $1 \leq i \leq r.$
    % \footnote{This assumes all
    % jobs being served by the algorithm and $\opt$ have distinct
    % remaining sizes. If, for example, $k$ jobs under $\opt$ have the
    % same remaining size $\hat{q},$ then $g$ would have an upward
    % jump of $k$ at $\hat{q}.$}
  \end{enumerate}
  
  Consider the change in $\Phi_1$ due to $OS$ ($n_o(q) \rightarrow
  n_o(q)-1$ for $q=q_o(1), \cdots, q_o(r)$):
  \begin{align}
    d\Phi_1 =& c_1 \sum_{i = 1}^r
    \biggl[f\left(\frac{n(q_o(i))-n_o(q_o(i))+1}{m}\right) \nn \\
    &\quad -
    f\left(\frac{n(q_o(i))-n_o(q_o(i))}{m}\right) \biggr]\tilde{s}_i dt \nn \\
    \stackrel{(a)}= &c_1 \sum_{i = 1}^r \Delta\left(\frac{n(q_o(i)) -
        n_o(q_o(i)) + 1}{m}\right) \tilde{s}_i dt \nn \\
    \stackrel{(b)}\le & c_1 \sum_{i = 1}^r
    \Delta\left(\frac{\tilde{n}(q_o(i)) - \tilde{n}_o(q_o(i)) +
        1}{m}\right) \tilde{s}_i dt \nn \\
    =& c_1 \sum_{i = 1}^r \Delta\left(\frac{g(q_o(i))+ 1}{m}\right)
    \tilde{s}_i dt \label{eq:dphi1_opt_new_case1}
\end{align}
In writing $(a)$ we take $\Delta(i/m) = 0$ for $i \leq 0.$~$(b)$ holds
since $\tilde{n}_o(q_o(i)) = n_o(q(i))$ for $1 \leq i \leq r,$ and
$\tilde{n}(q) \geq n(q)\ \forall\ q.$

Next, consider the change in $\Phi_1$ due to the algorithm ($n(q)
\rightarrow n(q)-1$ for $q=q(1), \cdots, q(m)$):

\begin{align}
  d\Phi_1 =& c_1 \sum_{i = 1}^m
  \biggl[f\left(\frac{n(q(i))-1-n_o(q(i))}{m}\right) \nn \\
  &\quad -
  f\left(\frac{n(q(i))-n_o(q(i))}{m}\right) \biggr] s_i dt \nn \\
  =& -c_1 \sum_{i = 1}^m \Delta\left(\frac{n(q(i)) -
      n_o(q(i))}{m}\right) s_i dt \nn \\
  \stackrel{(a)}\le & -c_1 \sum_{i = 1}^r
  \Delta\left(\frac{\tilde{n}(q(i)) - \tilde{n}_o(q(i))}{m}\right) s_i dt \nn\\
  & \quad -c_1 \sum_{i = r+1}^m \Delta\left(\frac{n(q(i)) -
      n_o(q(i))}{m}\right)
  s_i dt \nn \\
  \stackrel{(b)}\le & -c_1 \sum_{i = 1}^r
  \Delta\left(\frac{g(q(i))}{m}\right) s_i dt \nn \\
  &\quad -c_1 \sum_{i = r+1}^m \Delta\left(\frac{n-i+1-n_o}{m}\right)
  s_i dt \label{eq:dphi1_alg_new_case1}
\end{align}
Here, $(a)$ holds because $n(q(i)) = \tilde{n}(q(i))$ for
$1 \leq i \leq r,$ and $\tilde{n}_o(q) \geq n_o(q)$ for all $q.$~$(b)$
follows since $n(q(i)) \geq n-i+1.$\footnote{$n(q(i)) = n-i+1$ if the
  algorithm has exactly one job with remaining size $q(i).$ If
  multiple jobs have the same remaining size $q(i)$ under the
  algorithm, then we have $n(q(i)) \geq n-i+1.$}
 
We now combine \eqref{eq:dphi1_opt_new_case1} and
\eqref{eq:dphi1_alg_new_case1} to capture the overall change in
$\Phi_1.$ In doing so, we make the following crucial observation.

\noindent {\bf Claim~1:} For each $i \in \{1,2,\cdots,r\},$
\revision{there exists} a unique $j \in \{1,2,\cdots,r\}$ such that
$g(q(i)) \geq g(q_o(j))+ 1.$

To see that this claim is true, note that at each job with remaining
size $q(k)$ ($1 \leq k \leq r$) being served by the algorithm
contributes a down-tick of magnitude~1 in $g$ at $q(k).$ Similarly,
each job with remaining size $q_o(k)$ ($1 \leq k \leq r$) being served
by $OS$ contributes an up-tick of magnitude~1 in $g$ at
$q_o(k).$\footnote{The magnitude of the discontinuity in $g$ at $q$
  thus equals
  $|\{j \in R:\ q_o(j) = q\}| - |\{j \in R:\ q(j) = q\}|$, where
  $R = \{1,2,\cdots,r\}.$} It is therefore clear that each down-tick
from $l$ to $l-1$ can be mapped to an unique up-tick from $l-1$ to
$l.$ Moreover, at the downtick, say at $q(i),$ we have
$g(q(i)) \geq l$ (because $g$ is left-continuous), and at the
corresponding up-tick, say at $q_o(j),$ we have $g(q_o(j)) \leq l-1,$
implying $g(q_o(j)) +1 \leq l,$ (again, because $g$ is
left-continuous). This proves the claim.

Based on the above observation, combining
\eqref{eq:dphi1_opt_new_case1} and \eqref{eq:dphi1_alg_new_case1}, we
can now bound the overall change in $\Phi_1$ as
follows.\footnote{\revision{Note that until now, we took $\tilde{s}_i$
    to be the speed of the $i$th shortest job under $OS.$ However,
    once we match drift terms between the algorithm and $OS$, the
    indices $i$ can get shuffled. To maintain notational simplicity,
    we do not make this re-labelling explicit. The indices $i$ will be
    understood to span the active jobs/servers under the
    algorithm/$OS.$}}
\begin{align}
d\Phi_1/dt \leq& c_1 \sum_{i = 1}^r \Delta\left(\frac{g(q(i))}{m}\right) (-s_i
+ \tilde{s}_i) \nn \\ & \quad -c_1 \sum_{i = r+1}^m \Delta\left(\frac{n-i+1-n_o}{m}\right)
s_i.  \label{eq:phi1_case1_1}
\end{align}
Invoking Lemma~\ref{lem:bansal}, terms in the first summation
of \eqref{eq:phi1_case1_1} can be bounded as
 $$\Delta\left(\frac{g(q(i))}{m}\right) (-s_i + \tilde{s}_i) \leq
 P(\tilde{s}_i) - \frac{g(q(i))}{m},$$ since $s_k =
 P^{-1}\left(\frac{n}{m}\right) \geq
 P^{-1}\left(\frac{g(q(i))}{m} \right)$. Lemma~\ref{lem:bansal} can
 also be used to bound the terms of the second summation
 of \eqref{eq:phi1_case1_1} as
 $$- \Delta\left(\frac{n-i+1-n_o}{m} \right) s_i \leq - \frac{n-i+1-n_o}{m}$$
 (taking $\tilde{s}_k$ in the statement of Lemma~\ref{lem:bansal} to
 be zero). Combining the above bounds, we arrive at $   d\Phi_1/dt$
 \begin{align*}
 \leq& c_1 \sum_{i = 1}^r \left[ P(\tilde{s}_i) -
                    \frac{g(q(i))}{m}\right]  - c_1 \sum_{i = r+1}^m
                    \left(\frac{n-i+1-n_o}{m}\right).
 \end{align*}
 Finally, noting that $g(q(i)) \geq n - i +1 - n_o$ for
 $1 \leq i \leq r,$ we conclude that 
   \begin{align*}
     d\Phi_1/dt \leq& c_1 \sum_{i = 1}^r P(\tilde{s}_i) - c_1 \sum_{i = 1}^m
                      \left(\frac{n-i+1-n_o}{m}\right), \\
     =& c_1 n_0 - c_1 n + c_1 \left(\frac{m-1}{2}\right) + c_1 \sum_{i = 1}^r P(\tilde{s}_i).
 \end{align*}

 \revision{This proves the bound on $d\Phi_1/dt$ claimed in the
   statement of Lemma~\ref{lemma:phi1_OPT-SRPT} for the case
   $n \geq m.$}

\revision{
 \begin{rem}
   We now remark how the preceding argument generalizes if multiple
   jobs under the algorithm and/or $OS$ have identical remaining
   size. Suppose the algorithm has $k$ active jobs with remaining size
   $\hat{q},$ and $OS$ has $\ell$ active jobs, also with remaining
   size $\hat{q}.$ If these $k+\ell$ jobs are not being operated at
   the same speed, we simply disregard the change in $\Phi_1$ at that
   instant of time; after all, it suffices to prove
   \eqref{eq:mothereq} almost everywhere in time. On the other hand,
   if all $k+\ell$ jobs are being served at the same speed, say
   $\hat{s},$ the change in $\Phi_1$ is captured as follows.

   Suppose that $k \geq \ell.$ Note that $g$ has a downward jump of
   $k-\ell$ at $\hat{q}.$ The change in $\Phi_1$ due to the $k+\ell$
   jobs with remaining size $\hat{q}$ in service is given
   by:
   \begin{align*}
     d\Phi_1 &= c_1
     \biggl[f\left(\frac{n(\hat{q})-n_o(\hat{q}) - (k-\ell)}{m}\right) 
     - f\left(\frac{n(\hat{q})-n_o(\hat{q})}{m}\right) \biggr] \hat{s} dt  \\
     &= -c_1 \sum_{j = 0}^{k - \ell-1} \Delta\left(\frac{n(\hat{q}) -
       n_o(\hat{q}) - j}{m}\right) \hat{s} dt \\
     &= -c_1 \sum_{j = 0}^{k - \ell-1} \Delta\left(\frac{n(\hat{q}) -
       n_o(\hat{q}) - j}{m}\right) \hat{s} dt \\
     &\qquad + c_1 \sum_{j = k-\ell}^{k - 1} \Delta\left(\frac{n(\hat{q}) -
            n_o(\hat{q}) - j}{m}\right) (-\hat{s} + \hat{s}) dt
   \end{align*}
   The $k-\ell$ terms in the first sum above (which correspond to the
   downward jump of $k-\ell$ in $g$ at $\hat{q}$) can be matched to
   $k-\ell$ up-ticks in $g$ due to $OS$ as in Claim~1. The $\ell$
   terms in the second sum make zero contribution to the change in
   $\Phi_1,$ but can still be bounded using Lemma~\ref{lem:bansal}, so
   that the steps after \eqref{eq:phi1_case1_1} can be followed to
   arrive at the final bound on $d\Phi_1/dt.$

   A similar argument can be made if $k < \ell.$
 \end{rem}
 }

 \noindent {\bf Case~2: $n < m.$} Let $r$ denote the number of jobs in
 service under $OS.$ Define $h(q):= n(q) - n_o(q).$ \revision{As in
   Case~1, we assume for simplicity of exposition that all jobs under
   the algorithm and $OS$ have distinct remaining sizes. The general
   case can be handled as was pointed out in the proof of Case~1.} The
 rate of change of $\Phi_1$ can be expressed as follows:

 \begin{align*}
   \frac{d\Phi_1}{dt} &= c_1 \sum_{i = 1}^r
   \Delta\left(\frac{h(q_o(i))+ 1}{m}\right) \tilde{s}_i - c_1 \sum_{i
     = 1}^n \Delta\left(\frac{h(q(i))}{m}\right) s_i \\
   & \leq c_1 \sum_{i = 1}^{n_o} \Delta\left(\frac{h(q_o(i))+
       1}{m}\right) \tilde{s}_i - c_1 \sum_{i = 1}^n
   \Delta\left(\frac{h(q(i))}{m}\right) s_i
 \end{align*}

 \noindent {\bf Claim~2:} For each $i \in \{1,2,\cdots,n_o\}$ such
 that $h(q_o(i)) \geq 0,$ one can find a unique
 $j \in \{1,2,\cdots,n\}$ such that $h(q(j)) \geq h(q_o(i))+1.$

The proof of the above claim follows along the same lines as the proof
  of Claim~1 for $n \geq m.$ Note that $h$ is piecewise constant and
  left-continuous, with $h(0) = n-n_o,$ $h(q) = 0$ for large enough
  $q,$ has upward jumps at $q_o(i)$ $(i \leq n_o)$ at downward jumps
  at $q(i)$ $(i \leq n).$ Thus, any uptick in $h(\cdot)$ from $l-1$ to
  $l$ for $l \geq 1$ can be mapped to a unique downtick from $l$ to
  $l-1.$ The rest of the argument is identical to that of Claim~1.

 Based on the above observation, suppose that a subset $J$ of
 algorithm terms are matched with $OS$ terms.

 \begin{align*}
   \frac{d\Phi_1}{dt} & \leq c_1 \sum_{i \in J}
   \Delta\left(\frac{h(q(i))}{m}\right) (-s_i + \tilde{s}_i) \\ & \quad + c_i
   \sum_{i \notin J} \Delta\left(\frac{h(q(i))}{m}\right) (-s_i)
 \end{align*}

Applying Lemma~\ref{lem:bansal} as before, 
\begin{align*}
  \frac{d\Phi_1}{dt} & \leq c_1 \left(\sum_{i \in J} P(\tilde{s}_i) -
    \frac{h(q(i))}{m}\right) - c_1 \sum_{i \notin J} \frac{h(q(i))}{m}  \\
  &\leq  c_1 \sum_{i \in O} P(\tilde{s}_i) - \sum_{i = 1}^n
  \frac{h(q(i))}{m} \\
  &\leq c_1 \sum_{i \in O} P(\tilde{s}_i) - c_1 \sum_{i = 1}^n \frac{n - i +
    1 - n_o}{m} \\
  &\leq c_1 \sum_{i \in O} P(\tilde{s}_i) + c_1 n_o - c_1 \frac{n(n+1)}{2m}
\end{align*}
\end{proof}

\begin{proof}[Proof of Lemma~\ref{lemma:phi2_OPT-SRPT}]
The rate of change in $\Phi_2$ is    
\begin{align*}
 d\Phi_2/dt &= - c_2\sum_{k \in A} s_k  + c_2 \sum_{k \in O} \tilde{s}_k   \\
         & \leq - c_2 \min(n,m) P^{-1}(1) + c_2 \sum_{k \in O} \max(P(\bar{s}), P(\tilde{s}_k))
\end{align*}
The bounding of the first term above uses $s_k \geq P^{-1}(1).$ The
bounding of the second term is based on: (i) $\tilde{s}_k \leq
P(\bar{s})$ when $\tilde{s}_k \leq \bar{s},$ and (ii)
$\tilde{s}_k \leq P(\tilde{s}_k)$ when $\tilde{s}_k > \bar{s}.$
\end{proof}

\section{Proof of Theorem~\ref{thm:srpt}}
\label{app:srpt_2}

Unlike in the proof of Theorem~\ref{thm:srptimproved}, we now make no
assumptions on the scheduling of $\opt.$ We use the same potential
function $\Phi$ as before (see \eqref{defn:phi}), and show
that~\eqref{eq:mothereq} holds for a suitable $c.$ Note that it
suffices to show that~\eqref{eq:mothereq} holds at any instant~$t$
which is not an arrival or departure instant under the algorithm or
$\opt.$ For the remainder of this proof, consider any such time
instant $t.$ For ease of exposition, we drop the index $t$ from
$n(t,q), n(t,q_o), n(t), n_o(t), s_k(t)$ and ${\tilde s}_k(t),$ since
only a fixed (but generic) time instant $t$ is under consideration.

Our proof is based on the following lemmas.
\begin{lemma}\label{lem:phi1}
  For $n \geq m$, \begin{align*} d\Phi_1/dt \le &c_1 n_o -
  c_1(2-\alpha) n + c_1(2-\alpha)\left(\frac{m-1}{2}\right)\\ &\qquad
  + c_1\sum_{k \in O} P({\tilde s}_k), \end{align*} while for $n <
  m,$ \begin{align*} d\Phi_1/dt &\le c_1 n_o
  + \frac{c_1(2-\alpha)n}{2} + c_1\sum_{k \in O} P({\tilde
  s}_k).  \end{align*}
\end{lemma}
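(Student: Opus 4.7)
The plan is to follow a similar potential-function analysis as in the proof of Lemma~\ref{lemma:phi1_OPT-SRPT}, decomposing $d\Phi_1/dt$ into contributions from the algorithm's processing and OPT's processing, and bounding each. The key structural difference is that OPT is no longer assumed to follow SRPT, so the matching argument between the two sides' drift terms (Claim~1 in the proof of Lemma~\ref{lemma:phi1_OPT-SRPT}) no longer applies. Instead, I would bound the two contributions independently, invoking Lemma~\ref{lem:bansal} on OPT's side with the algorithm speed set to zero. The $(2-\alpha)$ factor in the statement reflects precisely the price paid for this separation.

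For the regime $n \geq m$, with speed $s = P^{-1}(n/m)$, the algorithm contributes $-c_1 \sum_{i=1}^m \Delta(d_i)\, s$ with $d_i = \max\{0,\, (n(q(i)) - n_o(q(i)))/m\}$, while OPT contributes $c_1 \sum_{j=1}^r \Delta(x_j)\, \tilde{s}_j$ with $x_j = (n(q_o(j)) - n_o(q_o(j)) + 1)/m$ (summed over the $r \leq m$ in-service OPT jobs at which the potential actually increases). Using $P(y)=y^\alpha$ and $d_i \leq n/m$, a direct calculation gives $\Delta(d_i)\, s \geq d_i$, while Lemma~\ref{lem:bansal} applied with $s_k=0$ yields $\Delta(x_j)\,\tilde{s}_j \leq (\alpha-1)\,x_j + P(\tilde{s}_j)$. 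I would then lower-bound $\sum_{i=1}^m d_i \geq (n - n_o) - (m-1)/2$ via the estimates $n(q(i)) \geq n-i+1$ and $n_o(q(i)) \leq n_o$, and upper-bound $\sum_{j=1}^r x_j \leq n - (m-1)/2$ by sorting OPT's in-service jobs by remaining size and using $n_o(q_o(j)) \geq r-j+1$ together with $n(q_o(j)) \leq n$; the sum $(r/m)(n - (r-1)/2)$ that emerges is maximized at $r=m$ under $r \leq m \leq n$.

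Putting these estimates together gives $d\Phi_1/dt \leq c_1\bigl[-(n - n_o - (m-1)/2) + (\alpha-1)(n - (m-1)/2)\bigr] + c_1 \sum_{k \in O} P(\tilde{s}_k)$, which algebraically collapses to the stated bound $c_1 n_o - c_1(2-\alpha)n + c_1(2-\alpha)(m-1)/2 + c_1 \sum_{k \in O} P(\tilde{s}_k)$. The regime $n < m$ proceeds the same way: the speed is now $P^{-1}(1)=1$, the inequality $\Delta(d_i) \geq d_i$ still holds (since $d_i \leq n/m < 1$ and $\alpha > 1$), and the analogous sum bounds become $\sum_{i=1}^n d_i \geq \max\{0,\, n(n+1)/(2m) - n\,n_o/m\}$ and $\sum_{j=1}^r x_j \leq n(n+1)/(2m)$ (the latter being the maximum of $(r/m)(n - (r-1)/2)$ over $r \leq m$); a brief case split on whether $n_o \geq (n+1)/2$ recovers the stated bound $c_1 n_o + c_1(2-\alpha)\, n/2 + c_1 \sum_{k \in O} P(\tilde{s}_k)$. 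The main obstacle throughout is the loss of the matching argument available in Lemma~\ref{lemma:phi1_OPT-SRPT}; to get the two independent side-bounds to assemble into the clean coefficients above, I would use the (weaker but sufficient) inequality $\Delta(d_i)\, s \geq d_i$ on the algorithm side, rather than the sharper $\Delta(d_i)\, s \geq \alpha\, d_i$ that holds for $\alpha > 1$.
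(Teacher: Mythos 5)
Your proposal is correct and follows essentially the same route as the paper's proof: decompose $d\Phi_1/dt$ into the algorithm's and $\opt$'s contributions, bound each side independently via Lemma~\ref{lem:bansal} with the other party's speed set to zero (your direct computation $\Delta(d_i)s \geq \alpha d_i \geq d_i$ is exactly that lemma specialized to $P(s)=s^\alpha$), and control the resulting sums using $n(q(i)) \geq n-i+1$, $n_o(q(i)) \leq n_o$ on one side and $n(q_o(j)) \leq n$, $n_o(q_o(j)) \geq$ rank on the other, with the same arithmetic-series evaluations yielding the $(2-\alpha)$ coefficients. The only cosmetic differences are your explicit maximization over $r$ of $\frac{r}{m}\bigl(n-\frac{r-1}{2}\bigr)$ (the paper simply pads the sum to $m$, resp.\ $n$, nonnegative terms) and your case split on $n_o$ in the $n<m$ regime, which is avoidable by not truncating $\sum_i d_i$ at zero.
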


\begin{lemma}
    \label{lem:phi2}
    $d\Phi_2/dt \leq -c_2 \min(m,n) + c_2\sum_{k \in O} P({\tilde s}_k)$
\end{lemma}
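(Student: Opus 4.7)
The plan is to compute $d\Phi_2/dt$ between event instants and then bound the two resulting terms separately, exploiting the power-law structure $P(s)=s^\alpha$ with $\alpha\in(1,2)$ throughout.

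First, I would observe that $\int_0^\infty n(t,q)\,dq$ equals the total remaining work in the system under the algorithm, since a job with remaining size $w$ contributes exactly $w$ to that integral; the same statement holds for $n_o$ under $\opt$. Between arrival and departure epochs, the algorithm's integral therefore decreases at rate $\sum_{k\in A}s_k$ while $\opt$'s decreases at rate $\sum_{k\in O}\tilde s_k$, giving
\[
\frac{d\Phi_2}{dt} \;=\; -\,c_2\sum_{k\in A}s_k \;+\; c_2\sum_{k\in O}\tilde s_k.
\]

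Second, I would handle the algorithm's term using the speed rule \eqref{eq:speeddef} with $P^{-1}(x)=x^{1/\alpha}$. When $n\ge m$, all $m$ servers are active at speed $(n/m)^{1/\alpha}$, giving $\sum_{k\in A}s_k = m^{1-1/\alpha}\,n^{1/\alpha}\ge m$, since $n\ge m$ and $\alpha>1$. When $n<m$, the $n$ active servers each run at $P^{-1}(1)=1$, giving $\sum_{k\in A}s_k = n$. In both regimes, $\sum_{k\in A}s_k\ge \min(m,n)$, which accounts for the negative term in the claim.

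Third, for $\opt$'s term I would invoke the inequality $s \le P(s) = s^\alpha$, which for $\alpha>1$ holds precisely on $\{s\ge 1\}$. The main obstacle is the case $\tilde s_k<1$. The cleanest fix, mirroring the proof of Lemma~\ref{lemma:phi2_OPT-SRPT}, is to write $\tilde s_k\le \max(1,P(\tilde s_k))$, i.e., $\tilde s_k\le\max(P(\bar s),P(\tilde s_k))$ using $\bar s=1$ for $P(s)=s^\alpha$; the resulting additive slack of at most $c_2\,|O|\le c_2 m$ can then be reabsorbed into the $-c_2\min(m,n)$ term by a constant adjustment when the bound is plugged into \eqref{eq:mothereq}. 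Alternatively, one can note that for $P(s)=s^\alpha$ with $\alpha>1$ it is locally suboptimal for $\opt$ to run any active server at speed below $1$ (raising the speed to $1$ strictly reduces flow time while keeping $P(s)\le 1$), so WLOG $\tilde s_k\ge 1$ and the clean inequality $\tilde s_k\le P(\tilde s_k)$ applies directly. Either route, combined with Steps~1 and~2, yields the stated bound.
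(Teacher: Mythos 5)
Your proposal is correct and is essentially the paper's proof: the same identity $d\Phi_2/dt = -c_2\sum_{k\in A}s_k + c_2\sum_{k\in O}\tilde s_k$ (via total remaining work), the same bound $\sum_{k\in A}s_k\ge\min(m,n)$ from the speed rule \eqref{eq:speeddef}, and your second route for the $\opt$ term (WLOG $\tilde s_k\ge P^{-1}(1)=1$, hence $\tilde s_k\le P(\tilde s_k)$) is exactly the argument the paper gives. Be aware, though, that your first route does not yield the lemma as stated: the slack $c_2|O|$ cannot be absorbed into $-c_2\min(m,n)$, since when $n<m$ that term is only $-c_2 n$ while $|O|$ may be as large as $m$; it would instead have to be absorbed into the $n_o$ term of \eqref{eq:mothereq} via $|O|\le n_o$, which weakens the lemma and alters the downstream constants, so the second route is the one to rely on.
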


Using Lemmas~\ref{lem:phi1} and~\ref{lem:phi2}, we now prove
\eqref{eq:mothereq} by considering the following two cases:

\noindent {\bf Case 1: $n \geq m.$} 
 
\begin{align*}
  &n + \sum_{k \in A} P(s_k) + d\Phi(t)/dt \\ \leq & n +
  n + c_1 n_o - c_1(2-\alpha) n +
  c_1(2-\alpha)\left(\frac{m-1}{2}\right)\\ &\quad + c_1\sum_{k \in O} P({\tilde s}_k) + -c_2
  m  +
  c_2\sum_{k \in O} P({\tilde s}_k) \\ \leq &(c_1 +
  c_2) \bigl( n_o + \sum_{k \in O} P({\tilde s}_k) \bigr) +
  n[2-c_1(2-\alpha)] \\ & \quad + \left[c_1(2-\alpha)\left(\frac{m-1}{2}\right)
  -c_2m \right] \\ \stackrel{(a)}\le & (c_1 + c_2) \bigl( n_o
  + \sum_{k \in O} P({\tilde s}_k) \bigr).
\end{align*}
Here, $(a)$ follows by setting $c_1 = \frac{2}{\revision{2-\alpha}},$ and
$c_2 \geq 1.$

\noindent {\bf Case 2: $n < m.$} 

\begin{align*}
  &n + \sum_{k \in A} P(s_k) + d\Phi(t)/dt \\
  \stackrel{(a)}\le & 2n + c_1 n_o + \frac{c_1(2-\alpha)n}{2} + c_1\sum_{k \in O} P({\tilde s}_k) 
  -c_2 n \\ & \quad +  c_2\sum_{k \in O} P({\tilde s}_k) \\
  \leq & (c_1 + c_2) \bigl( n_o + \sum_{k \in O} P({\tilde s}_k) \bigr) + n(2 + c_1(2-\alpha)/2 - c_2) \\
  \stackrel{(a)}\le & (c_1 + c_2) \bigl( n_o + \sum_{k =1}^{|O|} P({\tilde s}_k) \bigr),
\end{align*}
Here, $(a)$ follows setting $c_1 = \frac{2}{\revision{2-\alpha}},$ and
$c_2 = 3.$

This proves \eqref{eq:mothereq} for $c = c_1 + c_2 = 3
+ \frac{2}{2-\alpha}.$ It now remains to prove Lemmas~\ref{lem:phi1}
and~\ref{lem:phi2}.

\begin{proof}[Proof of Lemma~\ref{lem:phi1}]
  % Recall that $A$ and $O$ is the number of active jobs/servers with
  % the algorithm and the $\opt$, respectively.
  Let $q(i)$ denote the size of the~$i^{th}$ shortest job \emph{in
    service} under the algorithm. Note that since the algorithm
  performs STPT scheduling, $q(i)$ is also the size of the $i^{th}$
  shortest job in the system under the algorithm. Let $q_o(i)$ denote
  the size of the~$i^{th}$ largest job \emph{in service} under $\opt.$
  
  \noindent {\bf Case~1: $n \geq m$}. When $n \geq m,$ since the
  algorithm processes the $m$ shortest jobs, the change in $\Phi_1$
  because of the algorithm ($n(q) \rightarrow n(q)-1$ for
  $q=q(1), \cdots, q(m)$) is
  \begin{align}
    d\Phi_1 &= c_1\sum_{k=1}^m \biggl[f\left(\frac{n(q(k)) - 1- n_o(q(k))}{m}\right)s_k dt \nonumber\\
            &\qquad - c_1f\left(\frac{n(q(k))
              - n_o(q(k))}{m}\right)s_k dt\biggr] \nonumber \\ %\label{eq:dummy199}
            &\stackrel{(a)}\le - c_1\sum_{k=1}^m\Delta\left(\frac{n(q(k))- n_o(q(k))}{m}\right)s_k dt \label{eq:dphi1_alg_1} \\ 
            &\stackrel{(b)}\le -c_1\sum_{k=1}^m\Delta\left(\frac{n-k+1 - n_o}{m}\right)s_k dt \label{eq:dphi1_alg_2}
  \end{align}  
  In writing $(a)$ we take $\Delta(i/m) = 0$ for $i \leq 0.$~$(b)$
  follows since $n(q(k)) \geq n-k+1$, and $n_o(q(k)) \le n_o$ for all
  $k.$ Next, we bound the terms of \eqref{eq:dphi1_alg_2}
  using~Lemma \ref{lem:bansal}. For those terms where the argument of
  $\Delta(\cdot)$ is non-negative, Lemma~\ref{lem:bansal} implies that

  \begin{equation}
    \label{eq:dummy456}
    -\Delta\left(\frac{n-k+1 - n_o}{m}\right)s_k \leq - \left(\frac{n- k+1 -n_o}{m}\right);
  \end{equation}
  take $\tilde{s}_k = 0$ and note that
  $$s_k = P^{-1}\left(\frac{n}{m}\right) >
  P^{-1}\left(\frac{n-k+1-n_o}{m}\right).$$ \revision{The
    bound~\eqref{eq:dummy456} also holds for those terms where the
    argument of $\Delta(\cdot)$ is negative; in this case, the left
    hand side of the inequality is zero, and the right hand side is
    positive.} This yields the bound \begin{equation} d\Phi_1 \leq
    -c_1\sum_{k=1}^m\left(\frac{n- k+1 -n_o}{m}\right)
    dt.  \label{eq:dphi1_alg} \end{equation}

  We now consider the change in $\Phi_1$ due to $\opt.$
  \begin{align}
    d\Phi_1 &=  c_1\sum_{k \in O}   \Delta\left(\frac{n(q_o(k)) + 1- n_o(q_o(k))}{m}\right){\tilde s}_k(t) dt \nn \\
    \label{eq:dphi1_opt1}
& \stackrel{(a)}\le c_1\sum_{k \in O} \Delta\left(\frac{n +1 - k}{m}\right){\tilde s}_k(t) dt, 
\end{align}
where $(a)$ follows since $n(q_o(k)) \leq n,$ and
$n_o(q_o(k)) \geq k.$\footnote{The reader should verify that
  the bound on $n_o(q_o(i))$ applies even if multiple active jobs
  under $\opt$ have identical sizes.} Applying Lemma \ref{lem:bansal}
with $s_k=0,$ we get that the change in $\Phi_1$ because of $\opt$ is
\begin{align*}%\label{eq:secterm2}
   d\Phi_1/dt & \le c_1 (\alpha-1) \sum_{k=1}^{|O|}\left(\frac{n + 1 - k}{m}\right)
  +  c_1\sum_{k  \in O} P({\tilde s}_k),
\end{align*}
since \revision{$P^{-1}(i)P^{'}(P^{-1}(i))- i = (\alpha-1) i.$}
Finally, since $|O| \le m,$ we have that the change in $\Phi_1$
because of $\opt$ satisfies
\begin{equation}
\label{eq:dphi1_opt}
d\Phi_1/dt \leq  c_1 (\alpha-1) \sum_{k=1}^{m}\left(\frac{n- k+1}{m}\right) +  c_1\sum_{k \in O} P({\tilde s}_k).
\end{equation}

Combining \eqref{eq:dphi1_alg} and \eqref{eq:dphi1_opt}, the overall
change in $\Phi_1$ satisfies
\begin{align*}\nn
  d\Phi_1/dt \le  &c_1 n_o - c_1(2-\alpha) n +  c_1(2-\alpha)\left(\frac{m-1}{2}\right) \nonumber \\
  &\qquad + c_1\sum_{k \in O} P({\tilde s}_k).
\end{align*} 

\noindent {\bf Case~2: $n < m.$} Our approach in capturing the change
in $\Phi_1$ due to the algorithm is the same as that in Case~1, except
that the summations in \eqref{eq:dphi1_alg_1} and
\eqref{eq:dphi1_alg_2} only run from $k = 1$ to $k = n.$ An
application of Lemma~\ref{lem:bansal} as before implies that the 
component of $d\Phi_1/dt$ because of the algorithm satisfies
\begin{align}\label{eq:dphi1_alg_case2}
  d\Phi_1 \leq  -c_1\sum_{k=1}^n\left(\frac{n- k+1 -n_o}{m}\right) dt.
\end{align} 

The analysis of the impact of $\opt$ on $\Phi_1$ also proceeds as in
Case~1, except that the summation in \eqref{eq:dphi1_opt1} only runs
from $k = 1$ to $k = n;$ note that the remaining terms in the sum are
zero. An application of Lemma~\ref{lem:bansal} as before implies that
the component of $d\Phi_1/dt$ because of $\opt$ satisfies 
\begin{equation}
\label{eq:dphi1_opt_case2}
d\Phi_1/dt \leq  c_1 (\alpha-1) \sum_{k=1}^{n}\left(\frac{n- k+1}{m}\right) +  c_1\sum_{k \in O} P({\tilde s}_k)
\end{equation}

Combining \eqref{eq:dphi1_alg_case2} and \eqref{eq:dphi1_opt_case2},
the overall change in $\Phi_1$ is bounded as
\begin{align*}
  d\Phi_1/dt &\le c_1 n_o + \frac{c_1(2-\alpha)n}{2} + c_1\sum_{k \in O} P({\tilde s}_k).
\end{align*}
\end{proof}

\begin{proof}[Proof of Lemma~\ref{lem:phi2}]
The proof is similar to that of Lemma~\ref{lemma:phi2_OPT-SRPT},
except that we exploit the specific form of the power function.
\begin{align*}
 d\Phi_2/dt &= - c_2\sum_{k \in A} s_k  + c_2 \sum_{k \in O} \tilde{s}_k   \\
         & \leq - c_2 \min(n,m) + c_2 \sum_{k \in O} P(\tilde{s}_k)
\end{align*}
In the above bound, we use the fact that for $P(s) = s^{\alpha},$ the
minimum speed utilized by any algorithm is $P^{-1}(1) = 1.$ Thus,
$s_k,\tilde{s_k} \geq 1,$ and $\tilde{s_k} \leq P(\tilde{s_k}).$
\end{proof}

%}%end ignore

\bibliographystyle{IEEEtran}
\bibliography{refs}

\end{document}